\newtheorem{theorem}{Theorem}
\newtheorem{corollary}{Corollary}
\newtheorem{lemma}{Lemma}
\newtheorem{remark}{Remark}
\newtheorem{assumption}{Assumption}
\DeclareMathOperator*{\argmin}{\arg\!\min}
\title{\LARGE \bf Finite-Time Distributed State Estimation over Time-Varying Graphs: Exploiting the Age-of-Information}
\author{Aritra Mitra, John A. Richards, Saurabh Bagchi and Shreyas Sundaram
% <-this % stops a space
\thanks{A. Mitra, S. Bagchi, and S. Sundaram are with the School of Electrical and Computer Engineering at Purdue University. J. A.  Richards is with Sandia National Laboratories.   Email: {\tt \{mitra14, sbagchi, sundara2\}@purdue.edu},  {\tt{jaricha@sandia.gov}}. This work was supported in part by NSF CAREER award
1653648, and by a grant from Sandia National Laboratories. Sandia National Laboratories is a multimission laboratory managed and operated by National Technology \& Engineering Solutions of Sandia, LLC, a wholly owned subsidiary of Honeywell International Inc., for the U.S. Department of Energy's National Nuclear Security Administration under contract DE-NA0003525. The views expressed in the article do not necessarily represent the views of the U.S. Department of Energy or the United States Government.}}
\begin{document}
\maketitle
\begin{abstract}
We study the problem of collaboratively estimating the state of a discrete-time LTI process by a network of sensor nodes interacting over a time-varying directed communication graph. Existing approaches to this problem either (i) make restrictive assumptions on the dynamical model, or (ii) make restrictive assumptions on the sequence of communication graphs, or (iii) require multiple consensus iterations between consecutive time-steps of the dynamics, or (iv) require higher-dimensional observers. In this paper, we develop a distributed observer that operates on a single time-scale, is of the same dimension as that of the state, and works under mild assumptions of joint observability of the sensing model, and joint strong-connectivity of the sequence of communication graphs. Our approach is based on the notion of a novel ``freshness-index" that keeps track of the age-of-information being diffused across the network. In particular, such indices enable nodes to reject stale information regarding the state of the system, and in turn, help achieve stability of the estimation error dynamics. Based on the proposed approach, the estimate of each node can be made to converge to the true state exponentially fast, at any desired convergence rate. In fact, we argue that finite-time convergence can also be achieved through a suitable selection of the observer gains. Our proof of convergence is self-contained, and employs simple arguments from linear system theory and graph theory.
\end{abstract}
\section{Introduction}
Given a discrete-time LTI system $\mathbf{x}[k+1]=\mathbf{Ax}[k]$, and a linear measurement model $\mathbf{y}[k]=\mathbf{Cx}[k]$, a classical result in control theory states that one can design an observer that generates an asymptotically correct estimate $\hat{\mathbf{x}}[k]$ of the state $\mathbf{x}[k]$, if and only if the pair $(\mathbf{A},\mathbf{C})$ is detectable. Additionally, if the pair $(\mathbf{A},\mathbf{C})$ is observable, then one can achieve exponential convergence at any desired convergence rate. Over the last couple of decades, significant effort has been directed towards studying the distributed counterpart of the above problem, wherein observations of the process are distributed among a set of sensors modeled as nodes of a communication graph \cite{dist3,ugrinov,
kim,ren,martins,allerton,wang,han,mitraTAC,rego,wang2}. A fundamental question that arises in this context is as follows:  What are the minimal requirements on the measurement structure of the nodes and the underlying communication graph that guarantee the existence of a distributed observer? Here, by a distributed observer, we imply a set of state estimate update and information exchange rules that enable each node to track the entire state asymptotically. The question posed above was answered only recently in \cite{martins,allerton,wang,mitraTAC,han,rego} for static graphs.

The approaches in \cite{martins,allerton,wang,mitraTAC,wang2,han,rego} can be generally classified based on the following attributes. (i) Does the approach require multiple consensus iterations between two consecutive time-steps of the dynamics?\footnote{Such approaches, referred to as two-time-scale approaches, may prove to be computationally prohibitive for real-time applications.} (ii) What is the dimension of the estimator maintained by each node? (iii) Can the convergence rate be controlled? (iv) Is the approach robust to temporal variations in the underlying communication graph? The techniques proposed in \cite{martins,allerton,mitraTAC,wang,han,rego} operate on a single-time-scale, those in \cite{allerton,mitraTAC,han,rego} require observers of dimension no more than that of the state of the system, the ones in \cite{wang,wang2,han} can achieve any desired convergence rate, while the one in \cite{wang2} can account for a fairly general class of time-varying graphs. The main \textbf{contribution} of this paper is the development of a distributed observer that shares each of the above positive attributes. Specifically, we develop a single-time-scale distributed state estimation algorithm in Section \ref{sec:algo} that requires each node to maintain an estimator of dimension equal to that of the state (along with some simple counters), and works under the basic assumptions of joint observability of the observation model, and joint strong-connectivity of the sequence of communication graphs.\footnote{In Remark \ref{rem:assump}, we explain that each of these assumptions can be further relaxed.}

The authors in \cite{cao} point out that even for the basic consensus problem, understanding the difference between static graphs and time-varying graphs is much the same as understanding the difference between the stability of LTI systems and LTV systems, the implication being that the extension is highly non-trivial. Arguably, the stability analysis for the distributed state estimation problem with time-varying graphs is even more challenging, since one has to account for potentially unstable external dynamics, a feature that is missing in the standard consensus problem. Consequently, one can no longer directly leverage convergence properties of products of stochastic matrices. Nevertheless, in Section \ref{sec:mainresult}, we establish using simple arguments from linear system theory and graph theory, that based on our approach, each node can track the true dynamics exponentially fast at any desired convergence rate. Additionally, we show how to design the observer gains so as to achieve convergence in finite time. The closest related work is reported in \cite{wang2}, where the authors study a continuous-time analog of the problem under consideration, and develop a solution that leverages an elegant connection to the problem of distributed linear-equation solving \cite{mou}. In contrast to our technique, the one in \cite{wang2} is inherently a two-time-scale approach, requires each node to maintain and update auxiliary state estimates, and works under the assumption that the communication graph is strongly-connected at every time-instant.

The key idea behind our algorithm is the use of a suitably defined ``freshness-index" that keeps track of the age-of-information being diffused across the network. Loosely speaking, such indices are a measure of the accuracy with which the information received by a node describes the physical process being observed. While the freshness-indices enable a node to reject stale information, the assumption of joint strong-connectivity ensures that fresh information is diffused across the network sufficiently often. These facts taken together help achieve stability of the estimation error process. Finally, we point out that while this is perhaps the first use of the notion of age-of-information (AoI) in a networked control/estimation setting, such a concept has been widely  employed in the study of various queueing-theoretic problems arising in wireless networks \cite{kaul,huang,talak}.\footnote{The notion of age-of-information (AoI) was first introduced in \cite{kaul} as a performance metric to keep track of real-time status updates in a communication system. In the context of a wireless network, it measures the time elapsed since the generation of the packet most recently delivered to the destination. In Section \ref{sec:algo}, we will see how such a concept applies to the present setting.}

\section{Problem Formulation and Background}
We are interested in collaborative state estimation of a discrete-time LTI system of the form:
\begin{equation}
\mathbf{x}[k+1]=\mathbf{A}\mathbf{x}[k],
\label{eqn:system}
\end{equation}
where $k\in\mathbb{N}$ is the discrete-time index, $\mathbf{A}\in\mathbb{R}^{n \times n}$ is the system matrix, and $\mathbf{x}[k]\in\mathbb{R}^{n}$ is the state of the system.\footnote{We use $\mathbb{N}$ and $\mathbb{N}_{+}$ to denote the set of non-negative integers and the set of positive integers, respectively.} A network of sensors, modeled as nodes of a communication graph, obtain partial measurements of the state of the above process as follows:
\begin{equation}
\mathbf{y}_{i}[k]=\mathbf{C}_i\mathbf{x}[k],
\label{eqn:Obsmodel}
\end{equation}
where $\mathbf{y}_{i}[k] \in {\mathbb{R}}^{r_i}$ represents the measurement vector of the $i$-th node at time-step $k$, and $\mathbf{C}_i \in {\mathbb{R}}^{r_i \times n}$ represents the corresponding observation matrix. Let $\mathbf{y}[k]={\begin{bmatrix}\mathbf{y}^T_{1}[k] & \cdots & \mathbf{y}^T_{N}[k]\end{bmatrix}}^T$ and $\mathbf{C}={\begin{bmatrix}\mathbf{C}^T_{1} & \cdots & \mathbf{C}^T_{N}\end{bmatrix}}^T$ represent the collective measurement vector at time-step $k$, and the collective observation matrix, respectively. The goal of each node $i$ in the network is to generate an asymptotically correct estimate $\hat{\mathbf{x}}_i[k]$ of the true dynamics $\mathbf{x}[k]$. It may not be possible for any node $i$ in the network to accomplish such a task in isolation, since the pair $(\mathbf{A},\mathbf{C}_i)$ may not be detectable in general. Throughout the paper, we will only assume that the pair $(\mathbf{A},\mathbf{C})$ is observable. 

As is evident from the above discussion, information exchange among nodes is necessary to solve the problem at hand. At each time-step $k\in\mathbb{N}$, such interactions are modeled by a directed communication graph $\mathcal{G}[k]=(\mathcal{V},\mathcal{E}[k])$, where $\mathcal{V}=\{1,\ldots,N\}$ represents the set of nodes, and $\mathcal{E}[k]$ represents the edge set of $\mathcal{G}[k]$ at time-step $k$. Specifically, if $(i,j)\in\mathcal{E}[k]$, then node $i$ can send information directly to node $j$ at time-step $k$; in such a case, node $i$ will be called a neighbor of node $j$ at time-step $k$. We will use $\mathcal{N}_i[k]$ to represent the set of  all neighbors (excluding node $i$) of node $i$ at time-step $k$. When $\mathcal{G}[k]=\mathcal{G}  \hspace{2mm}\forall k\in\mathbb{N}$, where $\mathcal{G}$ is a static, directed communication graph, the necessary and sufficient condition (on the system and network) to solve the distributed state estimation problem is that each source component of $\mathcal{G}$ be collectively detectable \cite{martins}.\footnote{A source component of a directed graph is a strongly connected component with no incoming edges.} Our goal in this paper is to extend the above result to the scenario where the underlying communication graph is allowed to change over time. To this end, let the union graph over an interval $[k_1,k_2], 0 \leq k_1 < k_2$, indicate a graph with vertex set equal to $\mathcal{V}$, and  edge set equal to the union of the edge sets of the individual graphs appearing over the interval $[k_1,k_2]$. Based on this convention, we will assume that the sequence of communication graphs $\{\mathcal{G}[k]\}_{k=0}^{\infty}$ is ``jointly strongly-connected", in the sense described below.

\begin{assumption}\textbf{(Joint Strong-Connectivity)} There exists ${T}\in\mathbb{N}_{+}$ such that the union graph over every interval of the form $[kT,(k+1)T)$ is strongly-connected, where $k\in\mathbb{N}$.
\label{assump:connectivity}
\end{assumption}

For communication graphs satisfying the above assumption, our \textbf{objective} will be to design a distributed algorithm that ensures $\lim_{k\to\infty}\left\Vert\hat{\mathbf{x}}_i[k]-\mathbf{x}[k]\right\Vert=0, \forall i\in\mathcal{V}$.

To this end, we recall the following result from \cite{mitraTAC}.
\begin{lemma}
\label{transformations}
Given a system matrix $\mathbf{A}$, and a set of $N$ sensor observation matrices $\mathbf{C}_1, \mathbf{C}_2, \ldots, \mathbf{C}_{N}$, define $\mathbf{C} \triangleq {\begin{bmatrix}\mathbf{C}^T_{1} & \cdots & \mathbf{C}^T_{N}\end{bmatrix}}^T$. Suppose $(\mathbf{A},\mathbf{C})$ is observable. Then, there exists a similarity transformation matrix $\mathbf{T}$ that transforms the pair $\mathbf{(A,C)}$ to $(\bar{\mathbf{A}},\bar{\mathbf{C}})$, such that
\begin{equation}
\begin{aligned}
\bar{\mathbf{A}} &= \left[
\begin{array}{c|c|cc}
\mathbf{A}_{11}  & \multicolumn{3}{c}{\mathbf{0}} \\
\hline
\mathbf{A}_{21}
 & 
\mathbf{A}_{22} & \multicolumn{2}{c}{\mathbf{0}} \\
\cline{2-4}
\vdots & \vdots & \hspace{-5mm} \ddots  & \vdots \\
\mathbf{A}_{N1}& \mathbf{A}_{N2} \hspace{2mm}\cdots & \mathbf{A}_{N(N-1)} & \multicolumn{1}{|c}{\mathbf{A}_{NN}} 
\end{array}
\right], \\~\\
\bar{\mathbf{C}} &= \begin{bmatrix} \bar{\mathbf{C}}_1 \\ \bar{\mathbf{C}}_2 \vspace{-1mm} \\  \vdots \\ \bar{\mathbf{C}}_N \end{bmatrix} = \left[ \begin{array}{cccc} \mathbf{C}_{{11}} & \multicolumn{3}{|c}{\mathbf{0}}\\
\hline
\mathbf{C}_{{21}} & \multicolumn{1}{c}{\mathbf{C}_{{22}}} & \multicolumn{2}{|c}{\mathbf{0}}\\
\hline
\vdots&\vdots&\vdots&\vdots\\
\mathbf{C}_{{N1}} & \multicolumn{1}{c}{\mathbf{C}_{{N2}}}  & \cdots  \mathbf{C}_{N(N-1)} & \mathbf{C}_{NN}\\
 \end{array}
 \right],
\end{aligned}
\label{eqn:gen_form}
\end{equation}
and the pair $(\mathbf{A}_{ii},\mathbf{C}_{ii})$ is observable $\forall i \in \{1,2, \ldots, N\}$.
\end{lemma}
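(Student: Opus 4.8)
The plan is to construct the similarity transformation $\mathbf{T}$ directly from the geometry of the unobservable subspaces of the individual sensors, and to read off all three claimed properties from a single nested flag of $\mathbf{A}$-invariant subspaces. For each $i$, let $\mathcal{U}_i$ denote the unobservable subspace of the pair $(\mathbf{A},\mathbf{C}_i)$, i.e., the largest $\mathbf{A}$-invariant subspace contained in $\ker\mathbf{C}_i$; recall that $\mathcal{U}_i=\bigcap_{j=0}^{n-1}\ker(\mathbf{C}_i\mathbf{A}^j)$ and that it is $\mathbf{A}$-invariant. I would then define the decreasing sequence $\mathcal{F}_0=\mathbb{R}^n$ and $\mathcal{F}_i=\mathcal{F}_{i-1}\cap\mathcal{U}_i=\bigcap_{l=1}^{i}\mathcal{U}_l$. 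Each $\mathcal{F}_i$ is $\mathbf{A}$-invariant as an intersection of $\mathbf{A}$-invariant subspaces, and since $\bigcap_{i=1}^{N}\mathcal{U}_i$ is exactly the unobservable subspace of the aggregate pair $(\mathbf{A},\mathbf{C})$, observability of $(\mathbf{A},\mathbf{C})$ forces $\mathcal{F}_N=\{\mathbf{0}\}$. This yields the flag $\mathbb{R}^n=\mathcal{F}_0\supseteq\mathcal{F}_1\supseteq\cdots\supseteq\mathcal{F}_N=\{\mathbf{0}\}$.

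Next I would build $\mathbf{T}$ to be adapted to this flag. Concretely, I would choose the columns of $\mathbf{T}$ in $N$ ordered blocks $\mathbf{T}_1,\ldots,\mathbf{T}_N$, where the columns of $\mathbf{T}_i$ extend a basis of $\mathcal{F}_i$ to a basis of $\mathcal{F}_{i-1}$; equivalently, $\mathrm{span}(\mathbf{T}_{i+1},\ldots,\mathbf{T}_N)=\mathcal{F}_i$ and $\mathrm{span}(\mathbf{T}_i,\ldots,\mathbf{T}_N)=\mathcal{F}_{i-1}$. The concatenated columns then form a basis of $\mathcal{F}_0=\mathbb{R}^n$, so $\mathbf{T}$ is invertible (some blocks may be empty, giving trivial diagonal blocks, which is permitted). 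Writing $\bar{\mathbf{A}}=\mathbf{T}^{-1}\mathbf{A}\mathbf{T}$ and $\bar{\mathbf{C}}_i=\mathbf{C}_i\mathbf{T}$, the two sparsity patterns in \eqref{eqn:gen_form} follow from invariance and annihilation, respectively: the $\mathbf{A}$-invariance of $\mathcal{F}_{j-1}=\mathrm{span}(\mathbf{T}_j,\ldots,\mathbf{T}_N)$ gives $\mathbf{A}\mathbf{T}_j\subseteq\mathcal{F}_{j-1}$, so the $j$-th block column of $\bar{\mathbf{A}}$ has nonzero entries only in block rows $\geq j$ (block lower-triangular form), while $\mathbf{T}_l\subseteq\mathcal{F}_i\subseteq\ker\mathbf{C}_i$ for every $l>i$ gives $\mathbf{C}_i\mathbf{T}_l=\mathbf{0}$, which is precisely the claimed vanishing of $\bar{\mathbf{C}}_i$ in the block columns to the right of the $i$-th.

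The remaining and most delicate claim is the observability of each diagonal pair $(\mathbf{A}_{ii},\mathbf{C}_{ii})$, which is the pair induced by $(\mathbf{A},\mathbf{C}_i)$ on the quotient $\mathcal{F}_{i-1}/\mathcal{F}_i$ (both spaces being $\mathbf{A}$-invariant, and $\mathbf{C}_i$ descending to the quotient because $\mathcal{F}_i\subseteq\ker\mathbf{C}_i$). Here I would invoke the Kalman observability decomposition applied not to the full space but to the restricted pair $(\mathbf{A}|_{\mathcal{F}_{i-1}},\mathbf{C}_i|_{\mathcal{F}_{i-1}})$ on the invariant subspace $\mathcal{F}_{i-1}$. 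The key auxiliary fact to establish is that the unobservable subspace of this restricted pair is exactly $\mathcal{F}_i$: any $\mathbf{A}$-invariant subspace of $\mathcal{F}_{i-1}$ contained in $\ker\mathbf{C}_i$ must lie in $\mathcal{U}_i$ by maximality of $\mathcal{U}_i$, and also in $\mathcal{F}_{i-1}$, hence in $\mathcal{F}_{i-1}\cap\mathcal{U}_i=\mathcal{F}_i$, so $\mathcal{F}_i$ is the largest such subspace. Quotienting the restricted pair by its unobservable subspace then yields an observable pair by the standard Kalman result, and this quotient pair is exactly $(\mathbf{A}_{ii},\mathbf{C}_{ii})$ in the chosen basis. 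I expect the bookkeeping that matches the geometric quotients $\mathcal{F}_{i-1}/\mathcal{F}_i$ to the matrix diagonal blocks, together with the maximality argument identifying the restricted unobservable subspace, to be the main obstacle; the triangular and annihilation patterns are comparatively routine once the flag is fixed.
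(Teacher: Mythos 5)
Your proof is correct: the flag $\mathcal{F}_0=\mathbb{R}^n\supseteq\mathcal{F}_1\supseteq\cdots\supseteq\mathcal{F}_N=\{\mathbf{0}\}$ with $\mathcal{F}_i=\bigcap_{l=1}^{i}\mathcal{U}_l$, the maximality argument showing that $\mathcal{F}_i$ is exactly the unobservable subspace of $(\mathbf{A}|_{\mathcal{F}_{i-1}},\mathbf{C}_i|_{\mathcal{F}_{i-1}})$, and the identification of $(\mathbf{A}_{ii},\mathbf{C}_{ii})$ with the observable quotient pair on $\mathcal{F}_{i-1}/\mathcal{F}_i$ all go through, with observability of $(\mathbf{A},\mathbf{C})$ correctly forcing $\mathcal{F}_N=\{\mathbf{0}\}$ since $\bigcap_{i=1}^{N}\mathcal{U}_i$ is the unobservable subspace of the aggregate pair. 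The present paper states this lemma without proof, recalling it from \cite{mitraTAC}, and your construction is essentially the same approach as the proof given there --- a coordinate-free rendering of its sequential Kalman observability decomposition, which peels off the observable part with respect to node $1$, restricts to the resulting unobservable subspace, decomposes with respect to node $2$, and so on (with empty blocks permitted), which is precisely your flag-adapted basis.
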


We use the matrix $\mathbf{T}$ given by Lemma \ref{transformations} to perform the coordinate transformation $\mathbf{x}[k]=\mathbf{T}\mathbf{z}[k]$,  yielding:
\begin{equation}
\begin{aligned}
\mathbf{z}[k+1]&=\bar{\mathbf{A}}\mathbf{z}[k],\\
\mathbf{y}_i[k]&=\bar{\mathbf{C}}_i\mathbf{z}[k], \quad \forall i \in \{1, \ldots, N\},
\end{aligned}
\label{eqn:coordinatetransform}
\end{equation}
where $\bar{\mathbf{A}}={\mathbf{T}}^{-1}\mathbf{A}\mathbf{T}$ and $\bar{\mathbf{C}}_i = \mathbf{C}_i\mathbf{T}$ are given by (\ref{eqn:gen_form}). Commensurate with the structure of $\bar{\mathbf{A}}$, the vector $\mathbf{z}[k]$ is of the following form:
\begin{equation}
\mathbf{z}[k]={\begin{bmatrix}
{\mathbf{z}^{(1)}_{}[k]}^{T}&
\cdots&
{\mathbf{z}^{(N)}_{}[k]}^{T}
\end{bmatrix}}^{T},
\label{eqn:substates}
\end{equation}
where $\mathbf{z}^{(j)}[k]$ will be referred to as the $j$-th substate. By construction, since the pair $(\mathbf{A}_{jj},\mathbf{C}_{jj})$ is locally observable w.r.t. the measurements of node $j$, node $j$ will be viewed as the unique source node for substate $j$. In this sense, the role of node $j$ will be to ensure that each non-source node $i\in\mathcal{V}\setminus\{j\}$ maintains an asymptotically correct estimate of substate $j$. For a time-invariant strongly-connected graph, this is achieved in \cite{mitraTAC} by first constructing a spanning tree rooted at node $j$, and then requiring nodes to only listen to their parents in such a tree for estimating substate $j$. The unidirectional flow of information (from the source node $j$ to the rest of the network) so achieved guarantees stability of the error process. The above strategy is no longer applicable when the underlying communication graph is time-varying, for the following reasons. (i) For a given substate $j$, there may not exist a common spanning tree rooted at node $j$ in each graph $\mathcal{G}[k], k\in\mathbb{N}$. (ii) Assuming that a specific spanning tree rooted at node $j$ is guaranteed to repeat (not necessarily periodically) after a finite duration of time, is restrictive, and qualifies as only a special case of Assumption \ref{assump:connectivity}. (iii) Suppose for simplicity that $\mathcal{G}[k]$ is strongly-connected at each time-step (an assumption also made in \cite{wang2}), and hence, there exists a spanning tree $\mathcal{T}_j[k]$ rooted at node $j$ in each such graph. For estimating substate $j$, suppose consensus at time-step $k$ is performed along the spanning tree $\mathcal{T}_j[k]$ (assuming that it is possible to construct such trees in the first place at each time-step). As we demonstrate in the next section, switching between such spanning trees can lead to unstable error processes over time. Thus, if one makes no further assumptions on the system model (beyond joint observability), or the sequence of communication graphs (beyond joint strong-connectivity), ensuring stability of the estimation error dynamics becomes a challenging proposition. Nonetheless, in Section \ref{sec:algo}, we develop a fairly simple algorithm for tackling this problem. In the following section, we highlight the intuition behind our approach via an illustrative example. We close this section by commenting on the assumptions made in this paper.
\begin{remark}
\label{rem:assump}
While we work under the assumption that $(\mathbf{A},\mathbf{C})$ is observable, extension to the case when $(\mathbf{A},\mathbf{C})$ is only detectable is trivial (with appropriate implications for convergence rates), and hence not discussed explicitly. Likewise, the assumption of joint strong-connectivity can also be relaxed to only requiring jointly rooted graphs at the source node for each substate.
\end{remark}
\section{Illustrative Example}
\label{sec:example}
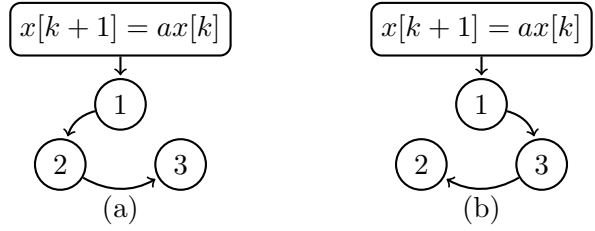
\begin{figure}[t]
\begin{center}
%\hspace*{-1cm}
\begin{tikzpicture}
[->,shorten >=1pt,scale=.4, minimum size=5pt, auto=center, node distance=2cm,
  thick,  node/.style={circle, draw=black, thick},]
\tikzstyle{block} = [rectangle, draw, text centered, rounded corners, minimum height=0.7cm, minimum width=0.5cm];
\node [block]  at (-2,2.5) (plant1) {$x[k+1]=ax[k]$};
\node [circle, draw](n1) at (-2,0)  (1)  {1};
\node [circle, draw](n2) at (-4,-2)   (2)  {2};
\node [circle, draw](n3) at (0,-2)   (3)  {3};
\node [ ] ( ) at (-2,-3.5) () {(a)};
\path[every node/.style={font=\sffamily\small}]
(plant1) 
        edge [] node [] {} (1)

    (1)
        edge [bend right] node [] {} (2)
        
    (2)
        edge [bend right] node [] {} (3);
        
\node [block]  at (10,2.5) (plant2) {$x[k+1]=ax[k]$};   
\node [circle, draw](n1) at (10,0)     (4)  {1};
\node [circle, draw](n2) at (8,-2)   (5)  {2};
\node [circle, draw](n3) at (12,-2)   (6)  {3};
\node [ ] ( ) at (10,-3.5) () {(b)};
\path[every node/.style={font=\sffamily\small}]
(plant2) 
        edge [] node [] {} (4)

(4)
       edge [bend left]  node [] {} (6)
        
 (6)   edge [bend left]  node [] {} (5);
       
\end{tikzpicture}
\end{center}
\caption{An LTI system is monitored by a network of 3 nodes, where the communication graph $\mathcal{G}[k]$ switches between the two graphs shown above.}
\label{fig:example}
\end{figure}
\begin{figure}[t]
\begin{center}
\begin{tabular}{cc}
\includegraphics[height=6cm, width=6cm]{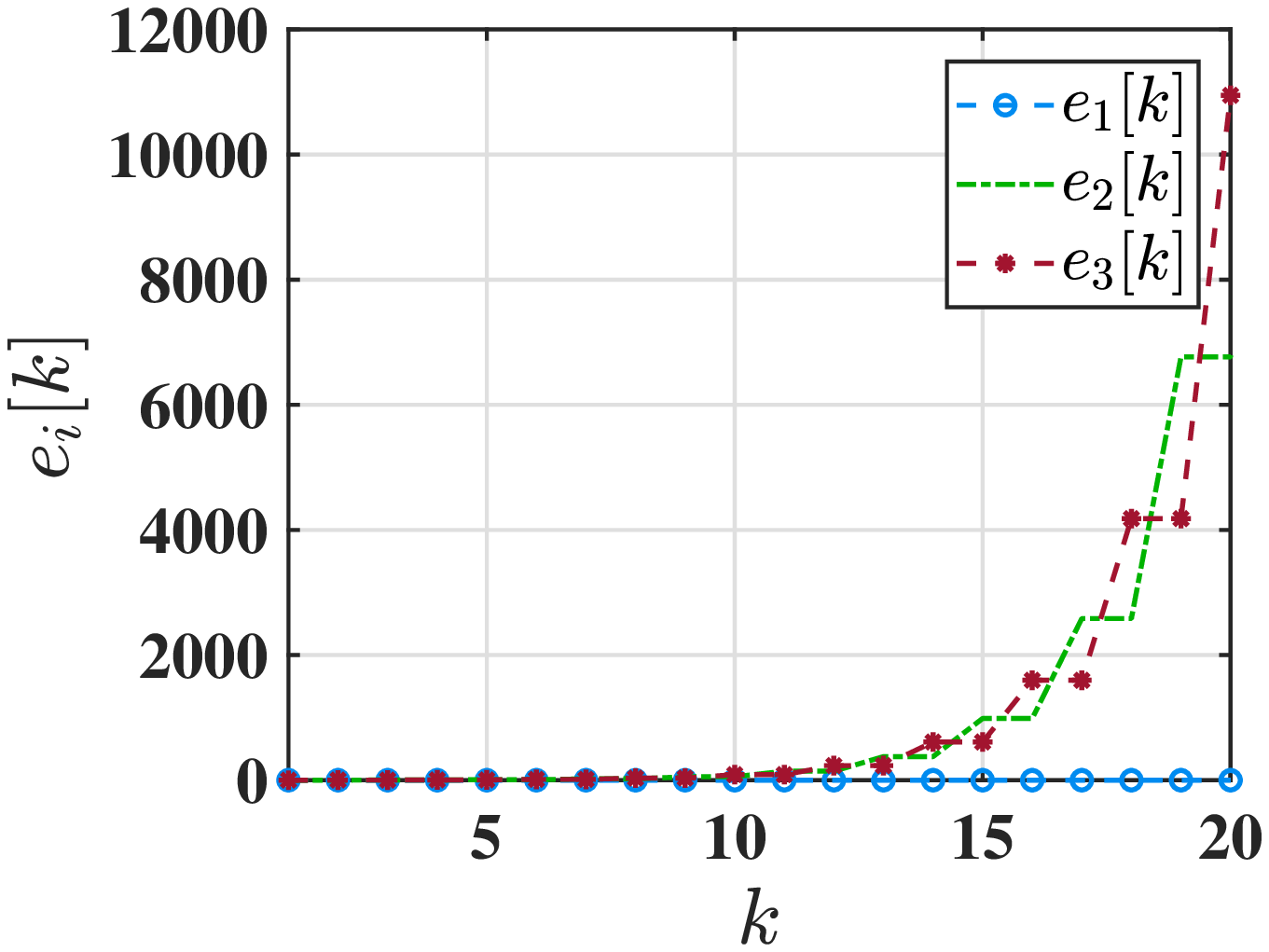}&\includegraphics[height=6cm, width=6cm]{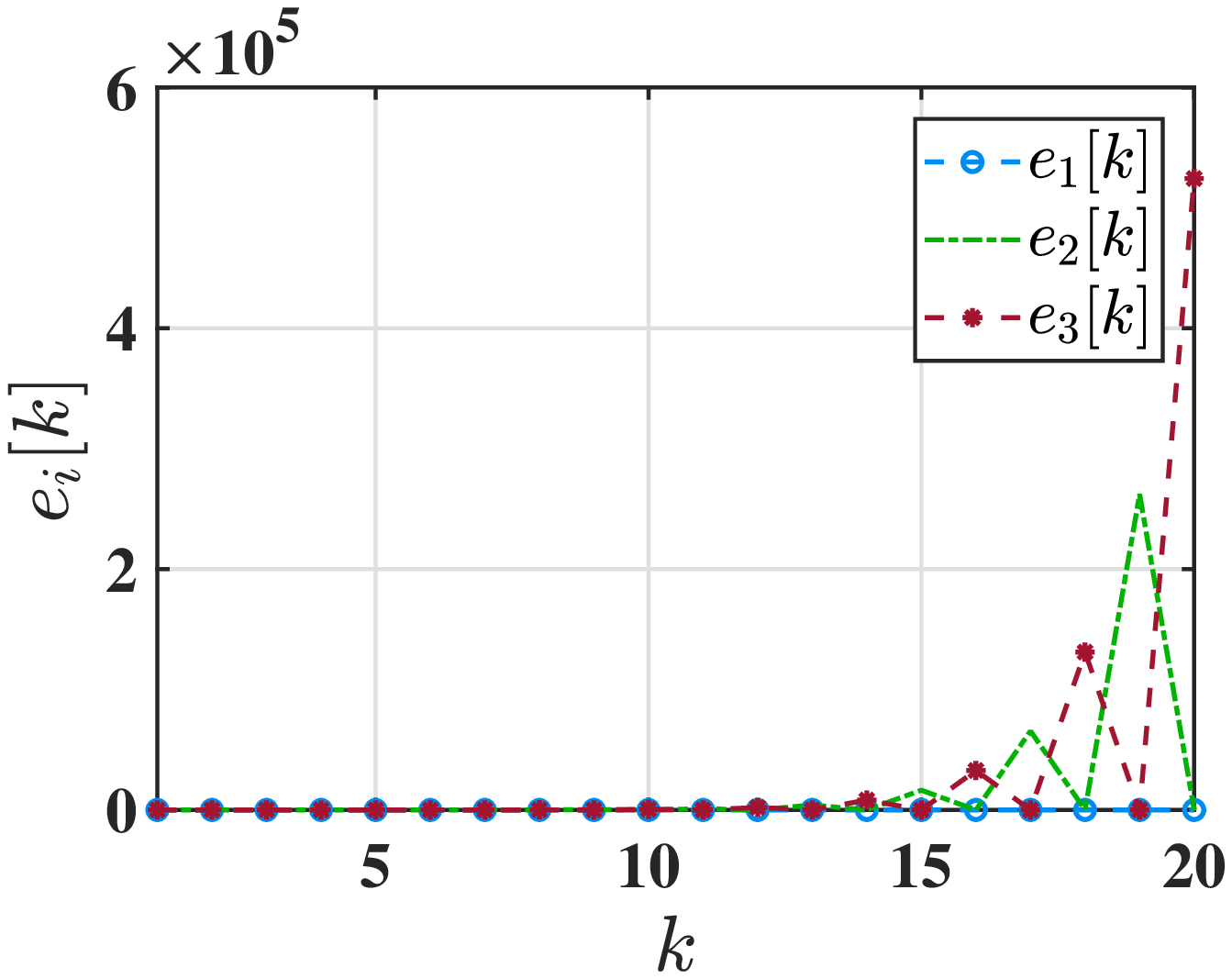}
\end{tabular}
\end{center}
\caption{Estimation error plots of the nodes for the model in Figure \ref{fig:example}. Simulations are performed for a model where $a=2$. The figure on the left corresponds to the case where consensus weights are distributed uniformly among neighbors, while the one on the right is the case where weights are placed along a tree rooted at node 1.}
\label{fig:error}
\end{figure}
Consider a network of 3 nodes monitoring a scalar unstable process $x[k+1]=ax[k]$, as shown in Figure \ref{fig:example}. The communication graph $\mathcal{G}[k]$ switches between the two topologies shown in Figure \ref{fig:example}. Specifically, $\mathcal{G}[k]$ is the graph in Figure $\ref{fig:example}$(a) at all even time-steps, and the one in $\ref{fig:example}$(b) at all odd time-steps. Node 1 is the only node with non-zero measurements, and thus acts as the source node for this network. Suppose for simplicity that it has perfect information of the state at all time-steps, i.e., $\hat{x}_1[k]=x[k], \forall k\in\mathbb{N}$. Given this setup, a standard consensus based state estimate update rule would take the form (see for example \cite{martins,mitraTAC,wang2}):
\begin{equation}
    \hat{x}_i[k+1]=a\left(\sum_{j\in\mathcal{N}_i[k]\cup\{i\}}w_{ij}[k]\hat{x}_j[k]\right), i\in\{2,3\},
    \label{eqn:sampleupdate}
\end{equation}
where the weights $w_{ij}[k]$ are non-negative, and satisfy $\sum_{j\in\mathcal{N}_i[k]\cup\{i\}}w_{ij}[k]=1, \forall k\in\mathbb{N}$. The key question is: how should the consensus weights be chosen to guarantee stability of the estimation errors of nodes 2 and 3? Even for this simple example, if such weights are chosen naively, then the errors may grow unbounded over time. To see this, consider the following two choices: (1) consensus weights are distributed evenly over the set $\mathcal{N}_i[k]\cup\{i\}$, and (2) consensus weights are placed along the tree rooted at node $1$ (i.e., when $\mathcal{G}[k]$ is the graph in Figure \ref{fig:example}(a) (resp., Figure \ref{fig:example}(b)), node 2 (resp., node 3) listens to only node 1, while node 3 (resp., node 2) listens to only node 2 (resp., node 3)). In each case, the error dynamics are unstable, as depicted in Figure \ref{fig:error}. To come up with a solution to this problem, suppose nodes 2 and 3 are aware of the fact that node 1 has perfect information of the state. Since nodes 2 and 3 have no measurements of their own, intuitively, it makes sense that they should place their consensus weights entirely on node 1 whenever possible. The trickier question for node 2 (resp., node 3) is to decide \textit{when} it should listen to node 3 (resp., node 2). Let us consider the situation from the perspective of node 2. At time-step 0, it adopts the information of node 1, and hence, the error of node 2 is zero at time-step 1. However, the error of node 3 is not necessarily zero at time-step 1. Consequently, if node 2 places a non-zero consensus weight on the estimate of node 3 at time-step 1, its error at time-step 2 would assume a non-zero value. Clearly, at time-step 1, node 2 is better off rejecting the information from node 3, and simply running open-loop. The main take-away point here is that adoption or rejection of information from a neighbor should be based on the quality of information that such a neighbor has to offer. In particular, a node that has come in contact with node 1 more recently is expected to have better information about the state (in this case, perfect information) than the other. In other words, to dynamically evaluate the quality of an estimate, the above reasoning suggests the need to introduce a metric that keeps track of how delayed that estimate is w.r.t. the estimate of the source node 1. We formalize the above observations by introducing such a metric in the next section.
\section{Algorithm}
\label{sec:algo}
Building on the intuition developed in the previous section, we introduce a new approach to designing distributed observers for a general class of time-varying networks. The main idea is the use of a ``freshness-index"  that keeps track of how delayed the estimates of a node are w.r.t. the estimates of a source node. Specifically, for updating its estimate of $\mathbf{z}^{(j)}[k]$, each node $i\in\mathcal{V}$ maintains and updates at every time-step a freshness-index $\tau^{(j)}_i[k]$. At each time-step $k\in\mathbb{N}$, the index $\tau^{(j)}_i[k]$ plays the following role: it determines whether node $i$ should adopt the information received from one of its neighbors in $\mathcal{N}_i[k]$, or run open-loop, for updating $\hat{\mathbf{z}}^{(j)}_i[k]$. In case it is the former, it also indicates which specific neighbor in $\mathcal{N}_i[k]$ that node $i$ should listen to at time-step $k$; this piece of information is particularly important for the problem under consideration, and ensures stability of the error process. A formal description of the rules that govern the update of the estimates of the $j$-th substate $\mathbf{z}^{(j)}[k]$ is as follows.
\begin{itemize}
    \item \underline{\textbf{Initialization of Freshness-Indices:}} Each node $i\in\mathcal{V}$ maintains an index $\tau^{(j)}_i[k]\in\{\omega\}\cup\mathbb{N}$, where $\omega$ is a dummy value. Specifically, $\tau^{(j)}_i[k]=\omega$  represents an ``infinite-delay"  w.r.t. the estimate of the source node for sub-state $j$, namely node $j$ (i.e., it represents that node $i$ has not received any information from node $j$ regarding substate $j$ up to time-step $k$). The indices $\tau^{(j)}_i[k]$ are initialized as: $\tau^{(j)}_j[0]=0, \tau^{(j)}_i[0]=\omega, \forall i\in\mathcal{V}\setminus\{j\}$.
    \vspace{3mm}
    \item \underline{\textbf{Update Rules for the Source Node:}} Node $j$ maintains $\tau^{(j)}_j[k]=0, \forall k\in\mathbb{N}$, and updates $\hat{\mathbf{z}}^{(j)}_j[k]$ as:
    \begin{equation}
     \hat{\mathbf{z}}^{(j)}_j[k+1]=(\mathbf{A}_{jj}-\mathbf{L}_j\mathbf{C}_{jj})\hat{\mathbf{z}}^{(j)}_j[k]+\sum \limits_{q=1}^{(j-1)}(\mathbf{A}_{jq}-\mathbf{L}_j\mathbf{C}_{jq})\hat{\mathbf{z}}^{(q)}_j[k]+\mathbf{L}_j\mathbf{y}_j[k],
     \label{eqn:sourceupdate}
    \end{equation}
where $\mathbf{L}_j$ is an output-injection gain to be decided later.
\vspace{3mm}
\item \underline{\textbf{Update Rules for the Non-Source Nodes:}} For each non-source node $i\in\mathcal{V}\setminus\{j\}$, we consider two distinct cases based on the value of $\tau^{(j)}_i[k]$.

\underline{\textbf{Case 1: $\tau^{(j)}_i[k]=\omega$}}. Define 
\begin{equation}
    \mathcal{M}^{(j)}_i[k]\triangleq\{l\in\mathcal{N}_i[k]: \tau^{(j)}_l[k]\neq\omega\}.
\label{eqn:setMdefn}
\end{equation}
If $\mathcal{M}^{(j)}_i[k]\neq\emptyset$, let $u=\argmin_{l\in\mathcal{M}^{(j)}_i[k]} \tau^{(j)}_l[k]$.\footnote{We drop the dependence of $u$ on parameters $i,j$ and $k$ for clarity.} If more than one node satisfies the above criterion, node $i$ picks any such one arbitrarily. It then updates $\tau^{(j)}_i[k]$ and $\hat{\mathbf{z}}^{(j)}_i[k]$ as follows:
\begin{equation}
    \tau^{(j)}_i[k+1]=\tau^{(j)}_u[k]+1,
    \label{eqn:indexupdatecase11}
\end{equation}
\begin{equation}
    \hat{\mathbf{z}}^{(j)}_i[k+1]=\mathbf{A}_{jj}\hat{\mathbf{z}}^{(j)}_u[k]+\sum \limits_{q=1}^{(j-1)}\mathbf{A}_{jq}\hat{\mathbf{z}}^{(q)}_i[k].
    \label{eqn:nonsourcecase1}
\end{equation}
If $\mathcal{M}^{(j)}_i[k]=\emptyset$, then
\begin{equation}
    \tau^{(j)}_i[k+1]=\omega,
    \label{eqn:indexupdatecase12}
\end{equation}
\begin{equation}
    \hat{\mathbf{z}}^{(j)}_i[k+1]=\mathbf{A}_{jj}\hat{\mathbf{z}}^{(j)}_i[k]+\sum \limits_{q=1}^{(j-1)}\mathbf{A}_{jq}\hat{\mathbf{z}}^{(q)}_i[k].
  \label{eqn:nonsourcecase2}
\end{equation}

 \underline{\textbf{Case 2: $\tau^{(j)}_i[k]\neq\omega$}}. Define 
\begin{equation}
    \mathcal{F}^{(j)}_i[k]\triangleq\{l\in\mathcal{M}^{(j)}_i[k]: \tau^{(j)}_l[k] < \tau^{(j)}_i[k]\},
\label{eqn:setFdefn}
\end{equation}
where $\mathcal{M}^{(j)}_i[k]$ is as defined in \eqref{eqn:setMdefn}. If $\mathcal{F}^{(j)}_i[k]\neq\emptyset$, let $u=\argmin_{l\in\mathcal{F}^{(j)}_i[k]} \tau^{(j)}_l[k]$. If more than one node satisfies the above criterion, node $i$ picks any such one arbitrarily. It then updates $\tau^{(j)}_i[k]$ as per \eqref{eqn:indexupdatecase11}, and  $\hat{\mathbf{z}}^{(j)}_i[k]$ as per \eqref{eqn:nonsourcecase1}. If $\mathcal{F}^{(j)}_i[k]=\emptyset$, then $\tau^{(j)}_i[k]$ is updated as
\begin{equation}
    \tau^{(j)}_i[k+1]=\tau^{(j)}_i[k]+1,
    \label{eqn:indexupdatecase2}
\end{equation}
and $\hat{\mathbf{z}}^{(j)}_i[k]$ is updated as per \eqref{eqn:nonsourcecase2}.
\end{itemize}

The above steps describe an approach for estimating $\mathbf{z}[k]$, and hence $\mathbf{x}[k]$, since $\mathbf{x}[k]=\mathbf{Tz}[k]$. We now briefly describe each rule of the proposed algorithm. Consider any substate $j\in\{1,\ldots,N\}.$ For estimation of substate $j$, since delays are measured w.r.t. the source node $j$, node $j$ maintains its freshness-index $\tau^{(j)}_j[k]$ at zero for all time, to indicate a zero delay w.r.t. itself. For updating its estimate of $\mathbf{z}^{(j)}[k]$, it uses only its own information, as is evident from \eqref{eqn:sourceupdate}. Every other node starts out with an ``infinite-delay" w.r.t. the source, which is represented by the freshness-index taking on the value $\omega$. The freshness-index of a node $i\in\mathcal{V}\setminus\{j\}$ changes from $\omega$ to a finite value when it comes in contact with a neighbor with a finite delay, i.e., with a freshness-index that is not $\omega$ (see equation \eqref{eqn:setMdefn}). At this point, we say that $\tau^{(j)}_i[k]$ has been ``triggered". Once triggered, at each time-step $k$, a non-source node $i$ will adopt the information of a neighbor $l \in \mathcal{N}_i[k]$ only if node $l$'s estimate of $\mathbf{z}^{(j)}[k]$ is ``more fresh" relative to its own, i.e., only if $\tau^{(j)}_l[k] < \tau^{(j)}_i[k]$ (see equation \eqref{eqn:setFdefn}). Among the set of neighbors in $\mathcal{M}^{(j)}_i[k]$ (if $\tau^{(j)}_i[k]$ has not yet been triggered), or in $\mathcal{F}^{(j)}_i[k]$ (if $\tau^{(j)}_i[k]$ has been triggered), node $i$ only adopts the information (based on \eqref{eqn:nonsourcecase1}) of the neighbor $u$ with the least delay. At this point, the delay of node $i$ matches that of node $u$, and this fact is captured by the update rule \eqref{eqn:indexupdatecase11}. In case node $i$ has no informative neighbor, it increments its own freshness-index linearly by $1$ (to capture the effect of its own information getting older) via the update rule \eqref{eqn:indexupdatecase2}, and runs open-loop based on \eqref{eqn:nonsourcecase2}.\footnote{When $\tau^{(j)}_i[k]=\omega$, any neighbor of node $i$ with a finite delay is informative, while if $\tau^{(j)}_i[k]\neq\omega$, only neighbors with strictly lower freshness-indices are considered informative by node $i$.} Based on the above rules, at any given time-step $k$, $\tau^{(j)}_i[k]$ measures the age-of-information of $\hat{\mathbf{z}}^{(j)}_i[k]$, relative to the source node $j$. This fact is shown later in Lemma \ref{lemma:form}.  
\section{Main Result and Analysis}
\label{sec:mainresult}
The main result of the paper is as follows.
\begin{theorem}
Given an LTI system  \eqref{eqn:system}, and a measurement model \eqref{eqn:Obsmodel}, suppose  $(\mathbf{A,C})$ is observable. Let the sequence of communication graphs $\{\mathcal{G}[k]\}_{k=0}^{\infty}$ satisfy Assumption \ref{assump:connectivity}. Then, the observer gains $\mathbf{L}_1, \ldots, \mathbf{L}_{N}$ can be designed in a manner such that the estimation error of each node $i\in\mathcal{V}$ converges to zero exponentially fast at any desired convergence rate $\rho$, based on the proposed algorithm.
\label{thm:main}
\end{theorem}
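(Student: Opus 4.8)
\section*{Proof Plan}

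The plan is to work in the transformed coordinates of Lemma \ref{transformations}, where $\bar{\mathbf{A}}$ is block lower-triangular and each local pair $(\mathbf{A}_{jj},\mathbf{C}_{jj})$ is observable, and to argue substate-by-substate via induction on $j$. First I would fix a substate $j$ and derive the error recursions for $\mathbf{e}^{(j)}_i[k]\triangleq\hat{\mathbf{z}}^{(j)}_i[k]-\mathbf{z}^{(j)}[k]$. Substituting the measurement decomposition $\mathbf{y}_j[k]=\sum_{q=1}^{j}\mathbf{C}_{jq}\mathbf{z}^{(q)}[k]$ (read off from $\bar{\mathbf{C}}$ in \eqref{eqn:gen_form}) into \eqref{eqn:sourceupdate} shows the source error obeys the stable, disturbance-driven recursion $\mathbf{e}^{(j)}_j[k+1]=(\mathbf{A}_{jj}-\mathbf{L}_j\mathbf{C}_{jj})\mathbf{e}^{(j)}_j[k]+\sum_{q=1}^{j-1}(\mathbf{A}_{jq}-\mathbf{L}_j\mathbf{C}_{jq})\mathbf{e}^{(q)}_j[k]$, while each non-source update \eqref{eqn:nonsourcecase1}--\eqref{eqn:nonsourcecase2} gives either $\mathbf{e}^{(j)}_i[k+1]=\mathbf{A}_{jj}\mathbf{e}^{(j)}_u[k]+\mathbf{d}^{(j)}_i[k]$ (adopt) or $\mathbf{e}^{(j)}_i[k+1]=\mathbf{A}_{jj}\mathbf{e}^{(j)}_i[k]+\mathbf{d}^{(j)}_i[k]$ (open-loop), where $\mathbf{d}^{(j)}_i[k]\triangleq\sum_{q=1}^{j-1}\mathbf{A}_{jq}\mathbf{e}^{(q)}_i[k]$ couples only to lower-indexed substates.

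The backbone of the argument is the age-of-information interpretation recorded in Lemma \ref{lemma:form}. Unrolling the non-source recursion back along the chain of neighbors selected by the $\argmin$ rule, and using that each adoption sets $\tau^{(j)}_i[k+1]=\tau^{(j)}_u[k]+1$ while open-loop sets $\tau^{(j)}_i[k+1]=\tau^{(j)}_i[k]+1$, I would establish that $\mathbf{e}^{(j)}_i[k]=\mathbf{A}_{jj}^{\tau^{(j)}_i[k]}\mathbf{e}^{(j)}_j[k-\tau^{(j)}_i[k]]$ plus a sum of at most $\tau^{(j)}_i[k]$ disturbance terms of the form $\mathbf{A}_{jj}^{p}\mathbf{d}^{(j)}_{(\cdot)}[\cdot]$ (the bracketed correction is absent when $j=1$). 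Here $k-\tau^{(j)}_i[k]$ is precisely the time-stamp of the source information currently carried by node $i$, which is what makes the index a genuine measure of staleness.

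The decisive step, and the one I expect to be the main obstacle, is proving that the freshness index stays bounded under Assumption \ref{assump:connectivity}. For this I would track the \emph{source time} $s_i[k]\triangleq k-\tau^{(j)}_i[k]$ (with $s_i[k]=-\infty$ before triggering) and observe that the update rules are exactly a max-consensus: $s_i[k+1]=\max\{s_i[k],\max_{l\in\mathcal{N}_i[k]}s_l[k]\}$ for non-source nodes, while the source injects the strictly increasing value $s_j[k]=k$. Consequently each $s_i[k]$ is non-decreasing, so the level set $R_\theta[k]=\{i:s_i[k]\ge\theta\}$ only grows in $k$; and since the union graph over each length-$T$ window is strongly connected, there is always an edge leaving $R_\theta$ that is active at some instant of the window, whence $|R_\theta|$ increases by at least one per window until it equals $\mathcal{V}$. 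Taking $\theta=mT$ then yields $s_i[k]\ge k-(N-1)T$ for all large $k$, i.e. $\tau^{(j)}_i[k]\le\bar{\tau}$ with $\bar{\tau}=\mathcal{O}(NT)$. The subtlety to get right is that a node adopts only strictly fresher neighbors, so one must verify the max-consensus form genuinely holds in both cases of the algorithm.

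Finally I would assemble the pieces by induction on $j$. Choosing each $\mathbf{L}_j$ so that $\rho(\mathbf{A}_{jj}-\mathbf{L}_j\mathbf{C}_{jj})<\rho$ (possible by observability of $(\mathbf{A}_{jj},\mathbf{C}_{jj})$), the base case gives $\mathbf{e}^{(1)}_i[k]=\mathbf{A}_{11}^{\tau}(\mathbf{A}_{11}-\mathbf{L}_1\mathbf{C}_{11})^{k-\tau}\mathbf{e}^{(1)}_1[0]$: boundedness of $\tau$ caps the possibly unstable factor $\mathbf{A}_{11}^{\tau}$ by $\|\mathbf{A}_{11}\|^{\bar{\tau}}$, while $k-\tau\to\infty$ forces the stabilized factor to decay at rate $\rho$, so the error vanishes exponentially. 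For the inductive step, the disturbances $\mathbf{d}^{(j)}_i[k]$ decay by hypothesis, the source error inherits this decay through its stable recursion, and the bounded-length disturbance sum in the form lemma then drives every non-source error to zero at rate $\rho$ (absorbing the resulting polynomial-in-$k$ prefactor by designing each block at a rate slightly below $\rho$). Finite-time convergence follows analogously by instead placing each $\mathbf{A}_{jj}-\mathbf{L}_j\mathbf{C}_{jj}$ nilpotent: once $k-\tau^{(j)}_i[k]$ exceeds the nilpotency index the propagated source term is annihilated, and the finitely supported disturbances are cleared within the induction, so all errors become identically zero after a finite transient.
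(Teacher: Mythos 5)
Your proposal is correct, and its skeleton coincides with the paper's: transform via Lemma \ref{transformations}, derive the cascaded source-error recursion, represent each non-source error as a delayed source error plus a bounded-length disturbance sum (your unrolled identity is exactly Lemma \ref{lemma:form}, including the correction terms from lower-indexed substates), bound the delay uniformly, and induct on the substate number. Where you genuinely depart from the paper is the delay bound, which is the crux of the analysis. The paper proves it in two separate lemmas (Lemmas \ref{lemma:counterinit} and \ref{lemma:delaybound}) via layered reachability sets $\mathcal{C}^{(j)}_r$, and the argument for the uniform bound must be restarted on every interval of length $(N-1)T$, yielding $\tau^{(j)}_i[k]\leq 2(N-1)T$. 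You instead pass to the source time-stamp $s_i[k]=k-\tau^{(j)}_i[k]$ and observe that the rules \eqref{eqn:indexupdatecase11}--\eqref{eqn:indexupdatecase2} implement exact max-consensus, $s_i[k+1]=\max\{s_i[k],\max_{l\in\mathcal{N}_i[k]}s_l[k]\}$, with the source injecting $s_j[k]=k$; this checks out, since adoption picks the $\argmin$ of $\tau$ (i.e., $\argmax$ of $s$) precisely when a strictly fresher neighbor exists, and open-loop preserves $s_i$ --- the subtlety you flag resolves correctly because ``no strictly fresher neighbor'' means node $i$ already holds the max. This reformulation buys monotonicity of $s_i[k]$ for free, unifies triggering and the recurring bound into a single monotone level-set argument, and in fact yields a slightly sharper constant (roughly $NT$ rather than $2(N-1)T$); your stated bound $s_i[k]\geq k-(N-1)T$ is off by up to $T-1$ due to window alignment, but this is immaterial since only $\bar{\tau}=\mathcal{O}(NT)$ is needed. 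The two approaches also differ in the final assembly: the paper places distinct real eigenvalues to get pure geometric bounds $\Vert(\mathbf{A}_{jj}-\mathbf{L}_j\mathbf{C}_{jj})^k\Vert\leq\alpha_j\rho_j^k$ and staggers the rates $\rho_1<\cdots<\rho_N<\rho$ so the cascade convolutions telescope geometrically, whereas you tolerate polynomial prefactors and absorb them by designing each block strictly below $\rho$ --- both are sound, with the paper's choice giving cleaner explicit constants (the recursion \eqref{eqn:defn}) and yours requiring less care in the gain design. Your finite-time remark via nilpotent placement matches Corollary \ref{corr:finitetime}.
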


In the remainder of this section, we develop a proof of the above result. The idea behind the proof is simple, and as follows. For each substate $j\in\{1,\ldots,N\}$, we want to establish the following three facts: (i) under Assumption \ref{assump:connectivity}, the freshness-index $\tau^{(j)}_i[k]$ of each node $i\in\mathcal{V}$ is guaranteed to get triggered after a finite period of time, (ii) the error in estimation (for substate $j$) of each non-source node $i\in\mathcal{V}\setminus\{j\}$ can be expressed as a delayed version of the estimation error of the source node $j$, (iii) under Assumption \ref{assump:connectivity}, such a delay is bounded above by a constant that depends only on the number of nodes $N$, and the parameter $T$ in Assumption \ref{assump:connectivity}. Based on the above facts, every non-source node $i\in\mathcal{V}\setminus\{j\}$ will inherit the same exponential convergence to the true dynamics $\mathbf{z}^{(j)}[k]$ as that achieved by the corresponding source node $j$. We begin with a simple result (Lemma \ref{lemma:sourceuse}) which states that a non-source node for a certain substate will always listen to the source node for the corresponding substate, whenever it is in a position to do so. Facts (i), (ii), and (iii) (as described above) are then established in Lemmas \ref{lemma:counterinit}, \ref{lemma:form}, and \ref{lemma:delaybound}, respectively.
\begin{lemma}
Consider any substate $j$, and suppose that at some time-step $k$, we have that  $j\in\mathcal{M}^{(j)}_i[k]$, for some $i\in\mathcal{V}\setminus\{j\}$. Then, based on the rules of the proposed algorithm, the following are true.
\begin{enumerate}
    \item[(i)] If $\tau^{(j)}_i[k]=\omega$, then $j=\argmin_{l\in\mathcal{M}^{(j)}_i[k]} \tau^{(j)}_l[k]$.
    \item[(ii)] If $\tau^{(j)}_i[k]\neq\omega$, then $j\in\mathcal{F}^{(j)}_i[k]$, and $j=\argmin_{l\in\mathcal{F}^{(j)}_i[k]} \tau^{(j)}_l[k]$.
\end{enumerate}
\label{lemma:sourceuse}
\end{lemma}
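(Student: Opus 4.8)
The plan is to reduce both parts of the lemma to a single structural invariant: among all nodes, the source node $j$ is the \emph{unique} node whose freshness-index for substate $j$ equals $0$. Once this is in hand, the two claims follow immediately from the definitions of $\mathcal{M}^{(j)}_i[k]$ in \eqref{eqn:setMdefn} and $\mathcal{F}^{(j)}_i[k]$ in \eqref{eqn:setFdefn}, since every element of these sets carries a finite (non-$\omega$) index, and $0$ is the smallest value a finite index can take.

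First I would establish the invariant by induction on the time-step $k$: for every $k\in\mathbb{N}$ and every non-source node $i\in\mathcal{V}\setminus\{j\}$, if $\tau^{(j)}_i[k]\neq\omega$ then $\tau^{(j)}_i[k]\geq 1$. The base case $k=0$ is immediate from the initialization $\tau^{(j)}_i[0]=\omega$ for all $i\neq j$. For the inductive step, I would examine the three ways a non-source node's index can be updated. If node $i$ adopts a neighbor $u$ via \eqref{eqn:indexupdatecase11}, then since $u\in\mathcal{M}^{(j)}_i[k]$ we have $\tau^{(j)}_u[k]\neq\omega$, hence $\tau^{(j)}_u[k]\geq 0$ and $\tau^{(j)}_i[k+1]=\tau^{(j)}_u[k]+1\geq 1$. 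If $i$ is updated via \eqref{eqn:indexupdatecase12}, then $\tau^{(j)}_i[k+1]=\omega$ and there is nothing to prove. Finally, if $i$ runs open-loop via \eqref{eqn:indexupdatecase2}, this occurs in Case 2 where $\tau^{(j)}_i[k]\neq\omega$, so the induction hypothesis gives $\tau^{(j)}_i[k]\geq 1$, whence $\tau^{(j)}_i[k+1]=\tau^{(j)}_i[k]+1\geq 2$. Since node $j$ maintains $\tau^{(j)}_j[k]=0$ for all $k$, this shows $j$ is the only node ever holding index $0$.

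With the invariant established, part (i) follows directly. Assuming $\tau^{(j)}_i[k]=\omega$ and $j\in\mathcal{M}^{(j)}_i[k]$, every $l\in\mathcal{M}^{(j)}_i[k]$ satisfies $\tau^{(j)}_l[k]\neq\omega$, hence $\tau^{(j)}_l[k]\geq 0$; by the invariant, equality holds only for $l=j$. Since $\tau^{(j)}_j[k]=0$ and $j\in\mathcal{M}^{(j)}_i[k]$, node $j$ is the unique minimizer, giving $j=\argmin_{l\in\mathcal{M}^{(j)}_i[k]}\tau^{(j)}_l[k]$. For part (ii), assuming $\tau^{(j)}_i[k]\neq\omega$ and $i$ is a non-source node, the invariant yields $\tau^{(j)}_i[k]\geq 1>0=\tau^{(j)}_j[k]$. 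Thus $j$ meets the defining condition of $\mathcal{F}^{(j)}_i[k]$ in \eqref{eqn:setFdefn}, so $j\in\mathcal{F}^{(j)}_i[k]$, and the same uniqueness argument shows $j=\argmin_{l\in\mathcal{F}^{(j)}_i[k]}\tau^{(j)}_l[k]$.

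The only step requiring any care is the inductive invariant, and I expect this to be the main (and essentially sole) obstacle, since both claims are otherwise direct consequences of the source node permanently holding the strictly smallest freshness-index. The remaining reasoning is purely a matter of unwinding the set definitions \eqref{eqn:setMdefn} and \eqref{eqn:setFdefn}.
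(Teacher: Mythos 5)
Your proof is correct and takes essentially the same approach as the paper: the paper's one-line proof rests on exactly the two observations you establish, namely $\tau^{(j)}_j[k]=0$ for all $k$ and $\tau^{(j)}_i[k]\geq 1$ whenever $\tau^{(j)}_i[k]\neq\omega$ for $i\in\mathcal{V}\setminus\{j\}$. Your induction on $k$ merely spells out in detail what the paper asserts as a ``direct consequence'' of the update rules \eqref{eqn:indexupdatecase11}, \eqref{eqn:indexupdatecase12}, and \eqref{eqn:indexupdatecase2}.
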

\begin{proof}
The result follows from two simple observations that are direct consequences of the rules of the proposed algorithm: (i) $\tau^{(j)}_j[k]=0, \forall k\in\mathbb{N}$, and (ii) for any $i\in\mathcal{V}\setminus\{j\}$, $\tau^{(j)}_i[k]\geq 1$ whenever $\tau^{(j)}_i[k]\neq\omega$. In other words, the source node for a given substate has the lowest freshness-index for that substate.
\end{proof}
\begin{lemma}
Suppose Assumption \ref{assump:connectivity} is met. Then, for each substate $j$, the following is true based on the rules of the proposed algorithm.
\begin{equation}
    \tau^{(j)}_i[k]\neq\omega,\forall k\geq (N-1)T, \forall i\in\mathcal{V}.
    \label{eqn:counterinit}
\end{equation}
\label{lemma:counterinit}
\end{lemma}
\begin{proof}
Fix a substate $j$, and notice that the claim is true for the corresponding source node $j$, since $\tau^{(j)}_j[k]=0,\forall k\in \mathbb{N}$. Let $\mathcal{C}^{(j)}_0=\{j\}$, and define:
\begin{equation}
    \mathcal{C}^{(j)}_1\triangleq\{i\in\mathcal{V}\setminus\mathcal{C}^{(j)}_0: \{\bigcup \limits_{\tau=0}^{T-1}\mathcal{N}_i[\tau]\}\cap\mathcal{C}^{(j)}_0\neq\emptyset\}.
\end{equation}
In words, $\mathcal{C}^{(j)}_1$ represents the set of non-source nodes that have a direct edge from node $j$ at least once over the interval $[0,T)$. Based on Assumption \ref{assump:connectivity}, $\mathcal{C}^{(j)}_1$ is non-empty (barring the trivial case when $\mathcal{V}=\{j\}$). For each $i\in\mathcal{C}^{(j)}_1$, it must be that $j \in \mathcal{M}^{(j)}_i[k]$ for some $k\in[0,T)$. Thus, based on \eqref{eqn:indexupdatecase11} and \eqref{eqn:indexupdatecase2}, it must be that $\tau^{(j)}_i[k]\neq\omega, \forall k\geq T, \forall i\in \mathcal{C}^{(j)}_1$. We can keep repeating the above argument by recursively defining the sets $\mathcal{C}^{(j)}_r,1\leq r \leq (N-1)$, as follows.
\begin{equation}
    \mathcal{C}^{(j)}_r\triangleq\{i\in\mathcal{V}\setminus\bigcup \limits_{q=0}^{(r-1)}\mathcal{C}^{(j)}_q:\{\hspace{-2.5mm}\bigcup \limits_{\tau=(r-1)T}^{rT-1}\hspace{-3mm}\mathcal{N}_i[\tau]\}\cap\{\bigcup \limits_{q=0}^{(r-1)}\mathcal{C}^{(j)}_q\}\neq \emptyset\}.
\end{equation}
We proceed via induction on $r$. Suppose it holds that $\tau^{(j)}_i[k]\neq\omega,\forall i\in\bigcup \limits_{q=0}^{(r-1)}\mathcal{C}^{(j)}_q,\forall k\geq(r-1)T$. If $\mathcal{V}\setminus\bigcup \limits_{q=0}^{(r-1)}\mathcal{C}^{(j)}_q$ is empty, then we are done. Else, based on Assumption \ref{assump:connectivity}, it must be that $\mathcal{C}^{(j)}_r$ is non-empty. Based on the induction hypothesis, it also follows that for each $i\in\mathcal{C}^{(j)}_r$, $\mathcal{M}^{(j)}_i[k]\cap\bigcup \limits_{q=0}^{(r-1)}\mathcal{C}^{(j)}_q\neq\emptyset$, for some $k\in[(r-1)T,rT)$. Consequently, based on \eqref{eqn:indexupdatecase11} and \eqref{eqn:indexupdatecase2}, we must have $\tau^{(j)}_i[k]\neq\omega, \forall i\in \mathcal{C}^{(j)}_r, \forall k\geq rT$. Finally, note that repeating the above argument at most $(N-1)$ times exhausts the node set $\mathcal{V}$.
\end{proof}

\begin{lemma}
Consider any substate $j$, and suppose that at some time-step $k$, we have  $\tau^{(j)}_i[k]=m$, where $i\in\mathcal{V}\setminus\{j\}$, and $m\in\mathbb{N}_{+}$. Then, the following is true based on the rules of the proposed algorithm.
\begin{equation}
    \hat{\mathbf{z}}^{(j)}_i[k]=\mathbf{A}_{jj}^{m}\hat{\mathbf{z}}^{(j)}_j[k-m]+\sum_{q=1}^{(j-1)}\hspace{-1mm}\sum_{\tau=(k-m)}^{(k-1)}\hspace{-2.5mm}\mathbf{A}_{jj}^{(k-\tau-1)}\mathbf{A}_{jq}\hat{\mathbf{z}}^{(q)}_{v(\tau)}[\tau],
    \label{eqn:delayedform}
\end{equation}
where $v(\tau)\in\mathcal{V}\setminus\{j\}.$
\label{lemma:form}
\end{lemma}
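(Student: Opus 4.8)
The plan is to prove \eqref{eqn:delayedform} by induction on the freshness-index value $m$. The conceptual crux is a \emph{compatibility} property linking the index update and the estimate update: in every branch of the algorithm that assigns a finite index to node $i$ at time $k$, there is a single node $w$ for which, simultaneously, $\tau^{(j)}_i[k]=\tau^{(j)}_w[k-1]+1$ and $\hat{\mathbf{z}}^{(j)}_i[k]=\mathbf{A}_{jj}\hat{\mathbf{z}}^{(j)}_w[k-1]+\sum_{q=1}^{(j-1)}\mathbf{A}_{jq}\hat{\mathbf{z}}^{(q)}_i[k-1]$. Once this is established, the claimed formula follows by repeatedly ``peeling off'' one factor of $\mathbf{A}_{jj}$ and tracing the index back to the source node $j$.

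First I would verify the compatibility property by enumerating the transitions at time $k-1$ that can produce $\tau^{(j)}_i[k]=m$ with $m\in\mathbb{N}_{+}$. There are exactly three: (a) node $i$ had $\tau^{(j)}_i[k-1]=\omega$ and was triggered through a neighbor $u$ (so $w=u$, via \eqref{eqn:indexupdatecase11} and \eqref{eqn:nonsourcecase1}); (b) node $i$ had a finite index and adopted a fresher neighbor $u\in\mathcal{F}^{(j)}_i[k-1]$ (again $w=u$, via \eqref{eqn:indexupdatecase11} and \eqref{eqn:nonsourcecase1}); and (c) node $i$ had a finite index but no fresher neighbor, so it ran open-loop (here $w=i$, via \eqref{eqn:indexupdatecase2} and \eqref{eqn:nonsourcecase2}). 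The remaining transition keeps the index at $\omega$ and so cannot yield a finite $m$; hence these cases are exhaustive, and in each the displayed identity holds with $\tau^{(j)}_w[k-1]=m-1$.

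For the base case $m=1$, the node $w$ satisfies $\tau^{(j)}_w[k-1]=0$; by the two observations in the proof of Lemma \ref{lemma:sourceuse} (only the source carries a zero index), we conclude $w=j$, so $\hat{\mathbf{z}}^{(j)}_i[k]=\mathbf{A}_{jj}\hat{\mathbf{z}}^{(j)}_j[k-1]+\sum_{q=1}^{(j-1)}\mathbf{A}_{jq}\hat{\mathbf{z}}^{(q)}_i[k-1]$, which is exactly \eqref{eqn:delayedform} with $m=1$ and $v(k-1)=i$. For the inductive step with $m\geq 2$, the compatibility node $w$ has $\tau^{(j)}_w[k-1]=m-1\geq 1$, hence $w\in\mathcal{V}\setminus\{j\}$, and the induction hypothesis applies to $w$ at time $k-1$. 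Substituting that expression for $\hat{\mathbf{z}}^{(j)}_w[k-1]$ into the compatibility identity and distributing the leading $\mathbf{A}_{jj}$, the coupling sum inherited from $w$ runs over $\tau\in[k-m,k-2]$ with prefactor $\mathbf{A}_{jj}^{k-\tau-1}$, while the residual term $\sum_{q=1}^{(j-1)}\mathbf{A}_{jq}\hat{\mathbf{z}}^{(q)}_i[k-1]$ is precisely the $\tau=k-1$ summand (prefactor $\mathbf{A}_{jj}^{0}$, with $v(k-1)=i$); together these extend the range to $\tau\in[k-m,k-1]$ and recover \eqref{eqn:delayedform}.

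I expect the main obstacle to be organizational rather than deep: the entire argument hinges on the compatibility observation, after which everything reduces to careful bookkeeping of the power of $\mathbf{A}_{jj}$ and the time argument in the coupling sum. The one point requiring explicit justification is that every node $v(\tau)$ appearing in the sum lies in $\mathcal{V}\setminus\{j\}$; this holds because each such node owns a finite, strictly positive index at the corresponding time (it sits strictly above the source in the peeled chain), so by the same reasoning as in Lemma \ref{lemma:sourceuse} it cannot be $j$. Finally, it is worth noting that the recursion is automatically well-founded: since every non-source index equals $\omega$ at $k=0$ and can increase by at most one per step, a non-source node can hold a finite index $m$ only at times $k\geq m$, so the chain terminates at the source node $j$ at the non-negative time $k-m$.
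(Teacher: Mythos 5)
Your proof is correct and takes essentially the same route as the paper's: induction on $m$, with the base case traced to the source node via the observations of Lemma \ref{lemma:sourceuse}, and the inductive step obtained by substituting the hypothesis for the node supplying the update into \eqref{eqn:nonsourcecase1}, with the residual coupling term becoming the $\tau=k-1$ summand with $v(k-1)=i$. Your unified ``compatibility'' property merely packages the adoption and open-loop branches (the latter being \eqref{eqn:nonsourcecase1} with $u=i$) that the paper dispatches with ``an identical argument applies to the latter,'' and your explicit checks that $v(\tau)\in\mathcal{V}\setminus\{j\}$ and $k-m\geq 0$ are sound minor refinements of the same argument.
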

\begin{proof}
Fix any substate $j$. We prove the result by inducting on $m$. Consider the base case when $m=1$, and suppose that at some time-step $k$, $\tau^{(j)}_i[k]=1$ for some $i\in\mathcal{V}\setminus\{j\}.$ Based on the rules of the proposed algorithm, this is possible if and only if node $j$ is a neighbor of node $i$ at time-step $k-1$, i.e., if and only if $j\in\mathcal{M}^{(j)}_i[k-1]$. Based on Lemma \ref{lemma:sourceuse} and \eqref{eqn:nonsourcecase1}, we must then have: 
\begin{equation}
 \hat{\mathbf{z}}^{(j)}_i[k]=\mathbf{A}_{jj}\hat{\mathbf{z}}^{(j)}_j[k-1]+\sum \limits_{q=1}^{(j-1)}\mathbf{A}_{jq}\hat{\mathbf{z}}^{(q)}_i[k-1].
\end{equation}
Notice that the above equation is of the form \eqref{eqn:delayedform}, with $m=1$ and $v(k-1)=i$. Now suppose that at any given time-step $k$, and for any node $i\in\mathcal{V}\setminus\{j\}$, if $\tau^{(j)}_i[k]\in\{1,2,\ldots,m-1\}$, where $m\geq2$, then the desired identity \eqref{eqn:delayedform} holds. Let at some time-step $k$, $\tau^{(j)}_i[k]=m$ for some $i\in\mathcal{V}\setminus\{j\}.$ Then, one of the two following events occurred at time-step $k-1$: (i) node $i$ adopted the information of some node $u\in\mathcal{N}_i[k-1]$ with $\tau^{(j)}_u[k-1]=(m-1)$, and updated $\hat{\mathbf{z}}^{(j)}_i[k-1]$ based on \eqref{eqn:nonsourcecase1}, and $\tau^{(j)}_i[k-1]$ based on \eqref{eqn:indexupdatecase11}, or  (ii) node $i$ updated $\hat{\mathbf{z}}^{(j)}_i[k-1]$ in an open-loop manner based on \eqref{eqn:nonsourcecase2}, and updated $\tau^{(j)}_i[k-1]$ based on \eqref{eqn:indexupdatecase2}, i.e., $\tau^{(j)}_i[k-1]$ was $(m-1)$. Consider the former scenario, since an identical argument applies to the latter. Since $m\geq2$, $\tau^{(j)}_u[k-1]=(m-1)\geq1$, and hence $u\in\mathcal{V}\setminus\{j\}$. The induction hypothesis thus applies to node $u$. In particular, the following is true:
   \begin{equation}
   \begin{aligned}
    \hat{\mathbf{z}}^{(j)}_u[k-1]&=\mathbf{A}_{jj}^{(m-1)}\hat{\mathbf{z}}^{(j)}_j[(k-1)-(m-1)]\\
    &+\sum_{q=1}^{(j-1)}\hspace{-1mm}\sum_{\tau=((k-1)-(m-1))}^{(k-2)}\hspace{-5mm}\mathbf{A}_{jj}^{((k-1)-\tau-1))}\mathbf{A}_{jq}\hat{\mathbf{z}}^{(q)}_{v(\tau)}[\tau].
    \end{aligned}
    \label{eqn:inducthyp}
\end{equation} 
Combining the above with \eqref{eqn:nonsourcecase1} yields the desired identity \eqref{eqn:delayedform}, with $v(k-1)=i$. This completes the proof.
\end{proof}
\begin{lemma}
Suppose Assumption \ref{assump:connectivity} is met. Then, for any substate $j$, the following is true based on the rules of the proposed algorithm.
\begin{equation}
   \tau^{(j)}_i[k] \leq 2(N-1)T, \forall k\geq (N-1)T, \forall i\in\mathcal{V}.
   \label{eqn:delaybound}
\end{equation}
\label{lemma:delaybound}
\end{lemma}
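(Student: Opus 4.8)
The plan is to reinterpret the freshness-index as an \emph{origination time} and thereby reduce the claim to a monotone flooding statement that can be settled by the same layering used in Lemma~\ref{lemma:counterinit}. Fix a substate $j$. Whenever $\tau^{(j)}_i[k]\neq\omega$, I would define $\theta^{(j)}_i[k]\triangleq k-\tau^{(j)}_i[k]$, and set $\theta^{(j)}_i[k]\triangleq-\infty$ when $\tau^{(j)}_i[k]=\omega$; intuitively $\theta^{(j)}_i[k]$ is the time-stamp of the source estimate that $\hat{\mathbf{z}}^{(j)}_i[k]$ ultimately descends from (cf.\ Lemma~\ref{lemma:form}). Translating the rules of Section~\ref{sec:algo} into this variable, I would record three facts: (i) $\theta^{(j)}_j[k]=k$ for all $k$, since $\tau^{(j)}_j[k]=0$; (ii) if node $i$ adopts a neighbor $u$ via \eqref{eqn:indexupdatecase11} and \eqref{eqn:nonsourcecase1}, then $\theta^{(j)}_i[k+1]=\theta^{(j)}_u[k]$, and because $u$ is adopted only when $\tau^{(j)}_u[k]<\tau^{(j)}_i[k]$ (with the convention that $\omega$ exceeds every finite index) we have $\theta^{(j)}_u[k]>\theta^{(j)}_i[k]$; and (iii) when node $i$ runs open-loop via \eqref{eqn:indexupdatecase2}, $\theta^{(j)}_i[k+1]=\theta^{(j)}_i[k]$. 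Facts (ii)--(iii) show that $\theta^{(j)}_i[k]$ is non-decreasing in $k$ (once finite it never decreases), and the desired bound \eqref{eqn:delaybound} is exactly the statement $\theta^{(j)}_i[k]\geq k-2(N-1)T$.

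Next I would prove a flooding estimate for $\theta$. For an integer $c\geq 0$, set the threshold $s_0\triangleq cT$ and define the ``infected'' set $\mathcal{I}[k]\triangleq\{i\in\mathcal{V}:\theta^{(j)}_i[k]\geq s_0\}$ for $k\geq s_0$. Three properties hold. First, $j\in\mathcal{I}[k]$ for every $k\geq s_0$, since $\theta^{(j)}_j[k]=k\geq s_0$. Second, $\mathcal{I}[k]\subseteq\mathcal{I}[k+1]$, because by facts (ii)--(iii) the value $\theta^{(j)}_i[k]$ never decreases, so membership persists. Third---the key spreading step---if $i\notin\mathcal{I}[k]$ but $i$ has an in-neighbor $u\in\mathcal{I}[k]\cap\mathcal{N}_i[k]$, then $i\in\mathcal{I}[k+1]$: here $\theta^{(j)}_u[k]\geq s_0>\theta^{(j)}_i[k]$, i.e.\ $\tau^{(j)}_u[k]<\tau^{(j)}_i[k]$, so $u$ lies in $\mathcal{F}^{(j)}_i[k]$ (resp.\ in $\mathcal{M}^{(j)}_i[k]$ when $\tau^{(j)}_i[k]=\omega$), and the $\argmin$ rule forces node $i$ to adopt the minimum-index, i.e.\ maximum-$\theta$, eligible neighbor, whose time-stamp is $\geq\theta^{(j)}_u[k]\geq s_0$; hence $\theta^{(j)}_i[k+1]\geq s_0$. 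These properties make $\mathcal{I}[k]$ activate nodes in exactly the same manner as the triggering process of Lemma~\ref{lemma:counterinit}. Re-running that layering argument with the sets $\mathcal{C}^{(j)}_r$ anchored at the windows $[(c+r-1)T,(c+r)T)$, $r=1,\dots,N-1$ (each union graph strongly connected by Assumption~\ref{assump:connectivity}), yields $\mathcal{I}[cT+(N-1)T]=\mathcal{V}$; that is, $\theta^{(j)}_i[cT+(N-1)T]\geq cT$ for every $i\in\mathcal{V}$.

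Finally I would remove the restriction to window boundaries. Given any $k_0\geq(N-1)T$, choose $c\triangleq\lfloor k_0/T\rfloor-(N-1)$, which satisfies $c\geq 0$ precisely because $k_0\geq(N-1)T$; then $cT+(N-1)T=T\lfloor k_0/T\rfloor\leq k_0$. The flooding estimate gives $\theta^{(j)}_i[T\lfloor k_0/T\rfloor]\geq cT$ for all $i$, and monotonicity of $\theta^{(j)}_i[k]$ in $k$ extends this to $\theta^{(j)}_i[k_0]\geq cT$. Hence $\tau^{(j)}_i[k_0]=k_0-\theta^{(j)}_i[k_0]\leq k_0-cT=(k_0\bmod T)+(N-1)T\leq NT-1$, which is at most $2(N-1)T$ for every $N\geq 2$ (the case $N=1$ being trivial), establishing \eqref{eqn:delaybound}. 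I expect the main obstacle to be the spreading step of the flooding argument: one must verify that an uninfected node with an infected in-neighbor is genuinely compelled by the $\argmin$ selection rule to inherit a time-stamp $\geq s_0$, which is exactly where the strict-freshness condition $\tau^{(j)}_u[k]<\tau^{(j)}_i[k]$ encoded in $\mathcal{F}^{(j)}_i[k]$ is essential. Once this monotone, source-refreshed flooding behavior is isolated, the rest reduces to the layering of Lemma~\ref{lemma:counterinit}, and the argument in fact yields the slightly sharper bound $NT-1$.
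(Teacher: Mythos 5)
Your proof is correct, and while its combinatorial engine is the same $T$-window layering as the paper's (you explicitly re-run the construction of Lemma \ref{lemma:counterinit}), the bookkeeping is genuinely different and arguably cleaner. The paper works with $\tau^{(j)}_i[k]$ directly: it first bounds all indices by $(N-1)T$ at time $(N-1)T$, then performs a second layered induction over $[(N-1)T,2(N-1)T]$ using the growth bound $\tau^{(j)}_i[k+1]\leq\tau^{(j)}_i[k]+1$ together with a two-case analysis (either the informed neighbor $u$ lies in $\mathcal{F}^{(j)}_i[k]$ and is adopted, or $\tau^{(j)}_i[k]\leq\tau^{(j)}_u[k]$ already holds), concluding $\tau^{(j)}_i[m(N-1)T]\leq(N-1)T$ for all $m\in\mathbb{N}_{+}$ and filling in the gaps between these restart times with the $+1$ growth---which is exactly where its factor of $2$ comes from. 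Your change of variables $\theta^{(j)}_i[k]=k-\tau^{(j)}_i[k]$ absorbs the growth bound into the single statement that $\theta$ is non-decreasing, collapses the paper's two-case step into one monotone-flooding spreading property (your verification that the $\argmin$ over $\mathcal{F}^{(j)}_i[k]$, resp.\ $\mathcal{M}^{(j)}_i[k]$, selects a neighbor with time-stamp at least $\theta^{(j)}_u[k]\geq s_0$ is precisely the crux, and it is right), and---because your anchor $c$ slides over all multiples of $T$ rather than restarting only at multiples of $(N-1)T$---yields the uniform bound $\tau^{(j)}_i[k]\leq(k\bmod T)+(N-1)T\leq NT-1$, which is strictly sharper than the paper's $2(N-1)T$ for every $N\geq2$ (by a margin of $(N-2)T+1$). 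The arithmetic in your final step checks out: $c=\lfloor k_0/T\rfloor-(N-1)\geq0$ exactly when $k_0\geq(N-1)T$, and monotonicity of $\theta$ bridges from $T\lfloor k_0/T\rfloor$ to $k_0$. What the paper's route buys is directness (it never leaves the variable $\tau$ that the algorithm manipulates); what yours buys is a reusable monotone-infection lemma and a smaller delay constant, which would propagate to a smaller $\bar{T}$-dependent prefactor in Theorem \ref{thm:main} and a shorter convergence horizon in Corollary \ref{corr:finitetime}.
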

\begin{proof}
Fix a substate $j$, and observe that the claim is trivially true for the source node $j$, since $\tau^{(j)}_j[k]=0, \forall k\in\mathbb{N}$. Our goal is to analyze how the freshness-indices $\tau^{(j)}_i[k]$ of the non-source nodes $i\in\mathcal{V}\setminus\{j\}$ evolve over the interval $[(N-1)T,2(N-1)T]$. To this end, notice that based on Lemma \ref{lemma:counterinit}, it must be that $\tau^{(j)}_i[(N-1)T]\neq\omega$, and in particular, $\tau^{(j)}_i[(N-1)T]\leq(N-1)T, \forall i\in\mathcal{V}\setminus\{j\}$. The latter claim is a consequence of the following observation, which in turn follows from the rules that govern equations \eqref{eqn:indexupdatecase11} and \eqref{eqn:indexupdatecase2}:
\begin{equation}
    \tau^{(j)}_i[k+1]\leq\tau^{(j)}_i[k]+1, \hspace{1mm} \textrm{whenever} \hspace{1mm}  \tau^{(j)}_i[k]\neq \omega.
    \label{eqn:indexbound}
\end{equation}
As in the proof of Lemma \ref{lemma:counterinit}, let $\mathcal{C}^{(j)}_0=\{j\}$, and define:
\begin{equation}
    \mathcal{C}^{(j)}_1\triangleq\{i\in\mathcal{V}\setminus\mathcal{C}^{(j)}_0: \{\hspace{-3mm}\bigcup \limits_{\tau=(N-1)T}^{NT-1}\hspace{-3.5mm}\mathcal{N}_i[\tau]\}\cap\mathcal{C}^{(j)}_0\neq\emptyset\}.
\end{equation}
In words, $\mathcal{C}^{(j)}_1$ represents the set of nodes that have a direct edge from node $j$ at least once over the interval $[(N-1)T,NT-1)$. Based on Assumption \ref{assump:connectivity}, unless $\mathcal{V}=\{j\}$ (in which case the claim holds trivially), $\mathcal{C}^{(j)}_1$ is non-empty. For each node $i\in\mathcal{C}^{(j)}_1$, it is easy to verify that based on Lemma \ref{lemma:sourceuse}, the update rules \eqref{eqn:indexupdatecase11}, \eqref{eqn:indexupdatecase2}, and equation \eqref{eqn:indexbound}, that  $\tau^{(j)}_i[NT]\leq T$. We proceed by defining the sets $\mathcal{C}^{(j)}_r,1\leq r \leq (N-1)$ recursively, and then inducting on $r$.
\begin{equation}
    \mathcal{C}^{(j)}_r\triangleq\{i\in\mathcal{V}\setminus\bigcup \limits_{q=0}^{(r-1)}\mathcal{C}^{(j)}_q:\{\hspace{-3.5mm}\bigcup \limits_{\tau=(N+r-2)T}^{(N+r-1)T-1}\hspace{-4.5mm}\mathcal{N}_i[\tau]\}\cap\{\bigcup \limits_{q=0}^{(r-1)}\mathcal{C}^{(j)}_q\}\neq \emptyset\}.
\end{equation}
Suppose the following is true:  $\tau^{(j)}_i[(N-1+q)T]\leq  qT,\forall i \in \mathcal{C}^{(j)}_{q}$, where $1\leq q \leq (r-1)$. Suppose $\mathcal{V}\setminus\bigcup \limits_{q=0}^{(r-1)}\mathcal{C}^{(j)}_q$ is non-empty (for the case when it is empty, the proof can be completed similarly). Based on Assumption \ref{assump:connectivity}, it must be that $\mathcal{C}^{(j)}_r$ is non-empty. Consider a node $i\in\mathcal{C}^{(j)}_r$. Based on the way $\mathcal{C}^{(j)}_r$ is defined, we know that at some time-step $k\in[(N+r-2)T,(N+r-1)T)$, node $i$ has a neighbor $u$ from the set $\bigcup \limits_{q=0}^{(r-1)}\mathcal{C}^{(j)}_q$. Based on the induction hypothesis and \eqref{eqn:indexbound}, it must be that $\tau^{(j)}_u[k] \leq (k-(N-1)T)$. At this point, there are two possibilities: (i) $u\in\mathcal{F}^{(j)}_i[k]$, and node $i$ updates $\tau^{(j)}_i[k]$ based on \eqref{eqn:indexupdatecase11}, or (ii) $u\notin\mathcal{F}^{(j)}_i[k]$ (implying that $\tau^{(j)}_i[k]\leq\tau^{(j)}_u[k]\leq(k-(N-1)T)$), and node $i$ updates $\tau^{(j)}_i[k]$ based on \eqref{eqn:indexupdatecase2}. In either case, it follows from \eqref{eqn:indexbound} that $\tau^{(j)}_i[(N-1+r)T]\leq rT$. Thus, we have established that $\tau^{(j)}_i[2(N-1)T]\leq (N-1)T, \forall i\in\bigcup \limits_{q=0}^{(N-1)}\mathcal{C}^{(j)}_q=\mathcal{V}$, where the fact that $\bigcup \limits_{q=0}^{(N-1)}\mathcal{C}^{(j)}_q=\mathcal{V}$ follows from Assumption \ref{assump:connectivity} and the manner in which the $\mathcal{C}^{(j)}_r$ sets are defined. Notice that in terms of the bound on the freshness-indices, we are essentially back to the same scenario as that at time-step $(N-1)T$. Thus, one can repeat the above argument to establish that $\tau^{(j)}_i[m(N-1)T]\leq(N-1)T, \forall i\in\mathcal{V}, \forall m\in\mathbb{N}_{+}$. Finally, based on the above bound and \eqref{eqn:indexbound}, notice that for any node $i\in\mathcal{V}$, $\tau^{(j)}_i[k]$ is strictly upper-bounded by $2(N-1)T$ at any time-step $k\in(m(N-1)T,(m+1)(N-1)T)$, where $m\in\mathbb{N}_{+}$.
\end{proof}
\begin{proof} \textbf{(Theorem \ref{thm:main})} 
Given a desired rate of convergence $\rho\in(0,1)$, pick a set of positive scalars $\{\rho_1,\ldots,\rho_N\}$, such that $\rho_1 < \rho_2 < \cdots \rho_N < \rho$. For each substate $j\in\{1,\ldots,N\}$, let the corresponding source node $j$ design the observer gain $\mathbf{L}_j$ (featuring in equation \eqref{eqn:sourceupdate}) in a manner such that the matrix $(\mathbf{A}_{jj}-\mathbf{L}_{j}\mathbf{C}_{jj})$ has distinct real eigenvalues with spectral radius equal to $\rho_j$. Such a choice of $\mathbf{L}_j$ exists as the pair $(\mathbf{A}_{jj},\mathbf{C}_{jj})$ is observable by construction. Then, there exists a set of positive scalars $\{\alpha_1, \ldots,\alpha_{N}\}$, such that \cite{horn}:\footnote{We use $\left\Vert\mathbf{A}\right\Vert$ to refer to the induced 2-norm of a matrix $\mathbf{A}$.}
\begin{equation}
    \left\Vert{(\mathbf{A}_{jj}-\mathbf{L}_{j}\mathbf{C}_{jj})}^k\right\Vert \leq \alpha_{j}\rho_{j}^k, \forall k\in\mathbb{N}.
    \label{eqn:schurbound}
\end{equation}
For a particular substate $j$, let $\mathbf{e}^{(j)}_i[k]=\hat{\mathbf{z}}^{(j)}_i[k]-\mathbf{z}^{(j)}[k]$. Consider the first substate $j=1$, and observe that based on \eqref{eqn:gen_form}, \eqref{eqn:coordinatetransform},    and \eqref{eqn:sourceupdate}, the following is true: $\mathbf{e}^{(1)}_{1}[k+1]=(\mathbf{A}_{11}-\mathbf{L}_{1}\mathbf{C}_{11})\mathbf{e}^{(1)}_{1}[k]$. Thus, we obtain
\begin{equation}
    \mathbf{e}^{(1)}_{1}[k]={(\mathbf{A}_{11}-\mathbf{L}_{1}\mathbf{C}_{11})}^k\mathbf{e}^{(1)}_1[0].
    \label{eqn:mode1rolledout}
\end{equation}
Taking norms on both sides of \eqref{eqn:mode1rolledout}, and using \eqref{eqn:schurbound}, yields:
\begin{equation}
    \left\Vert\mathbf{e}^{(1)}_1[k]\right\Vert \leq c_1\rho_{1}^k, \forall k\in\mathbb{N},
    \label{eqn:sourcebound1}
\end{equation}
where $c_1\triangleq\alpha_1\left\Vert\mathbf{e}^{(1)}_1[0]\right\Vert$. Based on Lemmas \ref{lemma:counterinit} and \ref{lemma:form}, and the fact that $\mathbf{z}^{(1)}[k]={(\mathbf{A}_{11})}^m\mathbf{z}^{(1)}[k-m], \forall m\in\mathbb{N}$, the following is true for any non-source node $i\in\mathcal{V}\setminus\{1\}$:
\begin{equation}
   \mathbf{e}^{(1)}_i[k]={(\mathbf{A}_{11})}^{\tau^{(1)}_i[k]}\mathbf{e}^{(1)}_1[k-\tau^{(1)}_i[k]], \forall k\geq(N-1)T.
   \label{eqn:errmode1}
\end{equation}
For each substate $j$, one can always find scalars $\beta_j, \gamma_j \geq 1$, such that $\left\Vert{(\mathbf{A}_{jj})}^{k}\right\Vert\leq\beta_j\gamma^k_j, \forall k\in\mathbb{N}$ \cite{horn}.\footnote{Note that such a bound also applies to the case when $\mathbf{A}_{jj}$ is Schur.} Using this bound and the one in \eqref{eqn:sourcebound1}, the fact that $\gamma_1\geq1$ and $\rho_1<1$, the fact that $\tau^{(1)}_i[k] \leq 2(N-1)T, \forall k \geq (N-1)T$ based on Lemma \ref{lemma:delaybound}, and the sub-multiplicative property of the 2-norm, we obtain the following by taking norms on both sides of \eqref{eqn:errmode1}:
\begin{equation}
    \left\Vert\mathbf{e}^{(1)}_i[k]\right\Vert \leq \bar{c}_1\rho_{1}^k, \forall k\geq(N-1)T, \forall i\in\mathcal{V},
    \label{eqn:boundmode1}
\end{equation}
where 
\begin{equation}
    \bar{c}_1\triangleq c_1\beta_1{\left(\frac{\gamma_1}{\rho_1}\right)}^{2\bar{T}}, \hspace{3mm} \bar{T}=(N-1)T.
    \label{eqn:cbar1}
\end{equation}
Note that $\bar{c}_1 \geq c_1$, and hence the bound in \eqref{eqn:boundmode1} applies to node 1 as well (see equation \eqref{eqn:sourcebound1}). 

Our goal is to now obtain a bound similar to that in \eqref{eqn:boundmode1} for each substate $j\in\{2,\ldots,N\}$. To this end, let  $g_{jq}=\left\Vert(\mathbf{A}_{jq}-\mathbf{L}_q\mathbf{C}_{jq})\right\Vert$, and $h_{jq}=\left\Vert\mathbf{A}_{jq}\right\Vert$. Define the following quantities recursively for $j\in\{2,\ldots,N\}.$
\begin{equation}
    \begin{aligned}
        c_j&\triangleq\frac{\alpha_j}{\rho_{j}^{(2j-3)\bar{T}}}\left(\left\Vert\mathbf{e}^{(j)}_j[(2j-3)\bar{T}]\right\Vert+\sum \limits_{q=1}^{(j-1)}\frac{g_{jq}\bar{c}_q}{(\rho_j-\rho_q)}\rho_{q}^{(2j-3)\bar{T}}\right),\\
        \bar{c}_j&\triangleq\beta_j\left(c_j{\left(\frac{\gamma_j}{\rho_j}\right)}^{2\bar{T}}+\sum\limits_{q=1}^{(j-1)}\frac{h_{jq}\bar{c}_q}{(\gamma_j-\rho_q)}{\left(\frac{\gamma_j}{\rho_q}\right)}^{2\bar{T}}\right),
    \end{aligned}
    \label{eqn:defn}
\end{equation}
where $c_1\triangleq\alpha_1\left\Vert\mathbf{e}^{(1)}_1[0]\right\Vert$, and $\bar{c}_1$ is as defined in \eqref{eqn:cbar1}. 
Based on the above definitions, we claim that for each substate $j\in\{1,\ldots,N\}$, the following is true:
\begin{equation}
    \left\Vert\mathbf{e}^{(j)}_i[k]\right\Vert \leq \bar{c}_j\rho^k_j, \forall k\geq(2j-1)\bar{T}, \forall i \in \mathcal{V}.
    \label{eqn:errboundmodej}
\end{equation}
To prove the above claim, we proceed via induction on the substate number $j$. Suppose the claim holds for all $q\in\{1,\ldots,j-1\}$, where $2\leq j \leq N$. To prove the desired result for substate $j$, observe that equations \eqref{eqn:gen_form} and \eqref{eqn:coordinatetransform} yield:
\begin{equation}
\begin{aligned}
    &\mathbf{z}^{(j)}[k+1]=\mathbf{A}_{jj}\mathbf{z}^{(j)}[k]+\sum_{q=1}^{(j-1)}\mathbf{A}_{jq}\mathbf{z}^{(q)}[k]\\
    \hspace{-3mm}&=(\mathbf{A}_{jj}-\mathbf{L}_j\mathbf{C}_{jj})\mathbf{z}^{(j)}[k]+\sum \limits_{q=1}^{(j-1)}(\mathbf{A}_{jq}-\mathbf{L}_j\mathbf{C}_{jq})\mathbf{z}^{(q)}[k]+\mathbf{L}_j\mathbf{y}_j[k].
\end{aligned}
\label{eqn:modej}
\end{equation}
Based on the above equation and \eqref{eqn:sourceupdate}, we obtain:
\begin{equation}
\mathbf{e}^{(j)}_j[k+1]=(\mathbf{A}_{jj}-\mathbf{L}_j\mathbf{C}_{jj})\mathbf{e}^{(j)}_j[k]+\sum \limits_{q=1}^{(j-1)}(\mathbf{A}_{jq}-\mathbf{L}_j\mathbf{C}_{jq})\mathbf{e}^{(q)}_j[k].
\end{equation}
Rolling out the above equation over time, we obtain:
\begin{equation}
\begin{aligned}
    \mathbf{e}^{(j)}_j[k]&={(\mathbf{A}_{jj}-\mathbf{L}_j\mathbf{C}_{jj})}^{(k-(2(j-1)-1)\bar{T})}\mathbf{e}^{(j)}_j[(2(j-1)-1)\bar{T}]\\
    &+\sum \limits_{q=1}^{(j-1)} \sum  \limits_{\tau=(2(j-1)-1)\bar{T}}^{k-1}\hspace{-4mm}{(\mathbf{A}_{jj}-\mathbf{L}_j\mathbf{C}_{jj})}^{(k-\tau-1)}(\mathbf{A}_{jq}-\mathbf{L}_j\mathbf{C}_{jq})\mathbf{e}^{(q)}_j[\tau],
    \end{aligned}
    \label{eqn:modejerrorrolledout}
\end{equation}
where $k\geq(2(j-1)-1)\bar{T}$. Taking norms on both sides of the above equation, using the sub-multiplicative property of the two-norm, and the triangle inequality, we obtain:
\begin{equation}
    \begin{aligned}
        \left\Vert\mathbf{e}^{(j)}_j[k]\right\Vert &\overset{(a)}{\leq} \alpha_j\rho^k_j\left(\frac{\left\Vert\mathbf{e}^{(j)}_j[(2j-3)\bar{T}]\right\Vert}{\rho_{j}^{(2j-3)\bar{T}}}+\frac{1}{\rho_j}\sum_{q=1}^{(j-1)}g_{jq}\hspace{-4mm}\sum  \limits_{\tau=(2j-3)\bar{T}}^{(k-1)}\hspace{-4mm}\rho^{-\tau}_j\left\Vert\mathbf{e}^{(q)}_j[\tau]\right\Vert\right)\\
        &\overset{(b)}{\leq} \alpha_j\rho^k_j\left(\frac{\left\Vert\mathbf{e}^{(j)}_j[(2j-3)\bar{T}]\right\Vert}{\rho_{j}^{(2j-3)\bar{T}}}+\frac{1}{\rho_j}\sum_{q=1}^{(j-1)}g_{jq} \bar{c}_q\hspace{-3mm}\sum  \limits_{\tau=(2j-3)\bar{T}}^{(k-1)}\hspace{-1mm}{\left(\frac{\rho_q}{\rho_j}\right)}^{\tau}\right)\\
        &\overset{(c)}{\leq} c_j\rho^k_j, \forall k \geq (2j-3)\bar{T}.\\
        \label{eqn:boundsourcemodej}
    \end{aligned}
\end{equation}
In the above inequalities, (a) follows from \eqref{eqn:schurbound} and the definition of $g_{jq}$, (b) follows by noting that $q\leq(j-1)$, $\tau\geq(2(j-1)-1)\bar{T}$, and then applying the induction hypothesis, (c) follows by simplifying the preceding inequality, noting that $\rho_q < \rho_j$ (by design), and using the definition of $c_j$ in \eqref{eqn:defn}. We have thus obtained a bound on the estimation error of substate $j$ for node $j$. To bound the estimation errors of substate $j$ for each non-source node $i\in\mathcal{V}\setminus\{j\}$, note that equation \eqref{eqn:modej} can be rolled out over time to yield:
\begin{equation}
\mathbf{z}^{(j)}[k]=\mathbf{A}_{jj}^{m}\mathbf{z}^{(j)}[k-m]+\sum_{q=1}^{(j-1)}\hspace{-1mm}\sum_{\tau=(k-m)}^{(k-1)}\hspace{-2.5mm}\mathbf{A}_{jj}^{(k-\tau-1)}\mathbf{A}_{jq}\mathbf{z}^{(q)}[\tau].
\end{equation}
Leveraging Lemma \ref{lemma:form}, we can then obtain the following error dynamics for a node $i\in\mathcal{V}\setminus\{j\}, \forall k\geq\bar{T}.$
\begin{equation}
\begin{aligned}
    \mathbf{e}^{(j)}_i[k]&={(\mathbf{A}_{jj})}^{\tau^{(j)}_i[k]}\mathbf{e}^{(j)}_j[k-\tau^{(j)}_i[k]]\\
    &+\sum_{q=1}^{(j-1)}\hspace{-1mm}\sum_{\tau=(k-\tau^{(j)}_i[k])}^{(k-1)}\hspace{-2.5mm}\mathbf{A}_{jj}^{(k-\tau-1)}\mathbf{A}_{jq}\mathbf{e}^{(q)}_{v(\tau)}[\tau].
    \end{aligned}
    \label{eqn:errornonsourcemodej}
\end{equation}
Based on the induction hypothesis and equation \eqref{eqn:boundsourcemodej}, notice that each error term featuring in the RHS of the above equation will decay exponentially provided $k$ is large enough. Specifically, suppose $k\geq(2j-1)\bar{T}$, in which case $k-\tau^{(j)}_i[k]\geq(2j-3)\bar{T}$, since $\tau^{(j)}_i[k]\leq2\bar{T}, \forall k\geq \bar{T}, \forall i\in\mathcal{V}$ based on Lemma \ref{lemma:delaybound}. For $k\geq(2j-1)\bar{T}$, taking norms on both sides of \eqref{eqn:errornonsourcemodej} yields:
\begin{equation}
    \begin{aligned}
    \left\Vert\mathbf{e}^{(j)}_i[k]\right\Vert &\overset{(a)}{\leq} \beta_j\left(c_j{\left(\frac{\gamma_j}{\rho_j}\right)}^{2\bar{T}}\rho^{k}_j+\gamma^{(k-1)}_j\sum_{q=1}^{(j-1)}h_{jq}\bar{c}_q\hspace{-4.5mm}\sum  \limits_{\tau=(k-\tau^{(j)}_i[k])}^{(k-1)}{\left(\frac{\rho_q}{\gamma_j}\right)}^{\tau}\right)\\
         &\overset{(b)}{\leq} \beta_j\left(c_j{\left(\frac{\gamma_j}{\rho_j}\right)}^{2\bar{T}}\rho^{k}_j+\gamma^{(k-1)}_j\sum_{q=1}^{(j-1)}h_{jq}\bar{c}_q\hspace{-1mm}\sum  \limits_{\tau=(k-2\bar{T})}^{(k-1)}{\left(\frac{\rho_q}{\gamma_j}\right)}^{\tau}\right)\\
         &\overset{(c)}{\leq} \beta_j\left(c_j{\left(\frac{\gamma_j}{\rho_j}\right)}^{2\bar{T}}\rho^{k}_j+\sum_{q=1}^{(j-1)}\frac{h_{jq}\bar{c}_q}{(\gamma_j-\rho_q)}{\left(\frac{\gamma_j}{\rho_q}\right)}^{2\bar{T}}\rho^k_q\right)\\
         &\overset{(d)}{\leq}\bar{c}_j\rho^k_j, \forall k\geq(2j-1)\bar{T}.
         \end{aligned}
        \label{eqn:boundnonsourcemodej}
\end{equation}
In the above inequalities, (a) follows from the induction hypothesis, equation \eqref{eqn:boundsourcemodej}, the bounds on the growth and the norm of $\mathbf{A}_{jj}$, and by noting that $\tau^{(j)}_i[k]\leq2\bar{T}$ (based on Lemma \ref{lemma:delaybound}), $\rho_j < 1, \gamma_j\geq1$, (b) follows by suitably changing the lower limit of the inner summation (over time), a change that is warranted since each term in the summation is non-negative, (c) follows by simplifying the preceding inequality, and (d) follows by noting that $\rho_q < \rho_j,  \forall q\in\{1,\ldots,j-1\}$ by design (and hence $\rho^k_q < \rho^k_j, \forall k\in\mathbb{N}$), and using the definition of $\bar{c}_j$ in \eqref{eqn:defn}. Note that the bound obtained in \eqref{eqn:boundnonsourcemodej} for each non-source node $i\in\mathcal{V}\setminus\{j\}$ applies also to the source node $j$, since $\bar{c}_j \geq c_j$. Let $\mathbf{e}_i[k]=\hat{\mathbf{z}}_i[k]-\mathbf{z}[k]$. For any node $i\in\mathcal{V}$, we then obtain the desired result as follows:
\begin{equation}
        \left\Vert\mathbf{e}_i[k]\right\Vert =\sqrt{\sum \limits_{j=1}^{N}{\left\Vert\mathbf{e}^{(j)}_i[k]\right\Vert}^2}
      \leq\left(\sqrt{\sum \limits_{j=1}^{N}\bar{c}^2_j}\right)\rho^{k}, \forall k \geq (2N-1)(N-1)T.
\end{equation}
\end{proof}
\begin{corollary} (\textbf{Finite-Time Convergence})
Suppose the conditions stated in Theorem \ref{thm:main} are met. Then, the observer gains $\mathbf{L}_1, \ldots, \mathbf{L}_{N}$ can be designed in a manner such that the estimation error of each node $i\in\mathcal{V}$ converges to zero in at most $n+2N(N-1)T$ time-steps.
\label{corr:finitetime}
\end{corollary}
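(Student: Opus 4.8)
The plan is to replace the exponentially convergent gains of Theorem \ref{thm:main} with \emph{deadbeat} gains, and then re-run the same inductive argument over the substates, this time tracking the exact time-step at which each error becomes identically zero rather than its exponential decay rate. Since the pair $(\mathbf{A}_{jj},\mathbf{C}_{jj})$ is observable by construction, for each substate $j$ one can choose the output-injection gain $\mathbf{L}_j$ so that $(\mathbf{A}_{jj}-\mathbf{L}_j\mathbf{C}_{jj})$ is nilpotent, i.e., all of its eigenvalues are placed at the origin. Letting $n_j$ denote the dimension of substate $j$ (so that $\sum_{j=1}^{N}n_j=n$), this choice guarantees ${(\mathbf{A}_{jj}-\mathbf{L}_j\mathbf{C}_{jj})}^{k}=\mathbf{0}$ for all $k\geq n_j$. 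With $\bar{T}=(N-1)T$ as before, the claim I would prove by induction on the substate index $j$ is that the estimation error of \emph{every} node for substate $j$ vanishes for all $k\geq S_j$, where $S_j\triangleq\sum_{q=1}^{j}n_q+2j\bar{T}$.

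For the base case $j=1$, the source node $1$ satisfies $\mathbf{e}^{(1)}_1[k]={(\mathbf{A}_{11}-\mathbf{L}_1\mathbf{C}_{11})}^{k}\mathbf{e}^{(1)}_1[0]$ (see equation \eqref{eqn:mode1rolledout}), so $\mathbf{e}^{(1)}_1[k]=\mathbf{0}$ for all $k\geq n_1$. Combining Lemma \ref{lemma:form} (which writes each non-source error as a delayed copy of the source error, with the lower-substate coupling terms absent since $j=1$) with Lemma \ref{lemma:delaybound} (which bounds the delay by $\tau^{(1)}_i[k]\leq 2\bar{T}$ for $k\geq\bar{T}$) shows that $\mathbf{e}^{(1)}_i[k]={(\mathbf{A}_{11})}^{\tau^{(1)}_i[k]}\mathbf{e}^{(1)}_1[k-\tau^{(1)}_i[k]]=\mathbf{0}$ as soon as $k-\tau^{(1)}_i[k]\geq n_1$, which is guaranteed for all $k\geq n_1+2\bar{T}=S_1$.

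The inductive step mirrors the two-part structure of the proof of Theorem \ref{thm:main}. Assume the claim holds for all substates $q<j$. For the source node $j$, its error recursion couples to the lower substates only through the term $\sum_{q=1}^{j-1}(\mathbf{A}_{jq}-\mathbf{L}_j\mathbf{C}_{jq})\mathbf{e}^{(q)}_j[k]$; by the induction hypothesis this term is zero for all $k\geq S_{j-1}$, after which the recursion reduces to $\mathbf{e}^{(j)}_j[k+1]=(\mathbf{A}_{jj}-\mathbf{L}_j\mathbf{C}_{jj})\mathbf{e}^{(j)}_j[k]$, so nilpotency forces $\mathbf{e}^{(j)}_j[k]=\mathbf{0}$ for all $k\geq S_{j-1}+n_j$. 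For a non-source node $i$, I would invoke the delayed-error expression \eqref{eqn:errornonsourcemodej}: its first term ${(\mathbf{A}_{jj})}^{\tau^{(j)}_i[k]}\mathbf{e}^{(j)}_j[k-\tau^{(j)}_i[k]]$ vanishes once $k-\tau^{(j)}_i[k]\geq S_{j-1}+n_j$, while its second term is a finite sum of lower-substate errors $\mathbf{e}^{(q)}_{v(\tau)}[\tau]$ whose smallest time index is $k-\tau^{(j)}_i[k]$ and hence vanishes once $k-\tau^{(j)}_i[k]\geq S_{j-1}$. Using $\tau^{(j)}_i[k]\leq 2\bar{T}$ from Lemma \ref{lemma:delaybound}, both conditions hold for $k\geq S_{j-1}+n_j+2\bar{T}=S_j$, completing the induction. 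Evaluating at $j=N$ yields vanishing error for all nodes across all substates, hence for $\mathbf{e}_i[k]$ itself, once $k\geq S_N=\sum_{q=1}^{N}n_q+2N\bar{T}=n+2N(N-1)T$, as desired. I expect the only real obstacle to be the bookkeeping in the inductive step: one must verify that the delay window $[k-\tau^{(j)}_i[k],k-1]$ appearing in the coupling sum of \eqref{eqn:errornonsourcemodej} lies entirely beyond $S_{j-1}$, and that the telescoping $S_j=S_{j-1}+n_j+2\bar{T}$ accumulates to exactly $n+2N(N-1)T$ rather than a looser bound.
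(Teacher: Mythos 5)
Your proposal is correct and follows essentially the same route as the paper's own proof: deadbeat gains placing all eigenvalues of $(\mathbf{A}_{jj}-\mathbf{L}_j\mathbf{C}_{jj})$ at the origin, then the same induction over substates using the delayed-error representation of Lemma \ref{lemma:form} and the delay bound $\tau^{(j)}_i[k]\leq 2\bar{T}$ from Lemma \ref{lemma:delaybound}. Your explicit bookkeeping $S_j=S_{j-1}+n_j+2\bar{T}$, with $S_j=\sum_{q=1}^{j}n_q+2j\bar{T}$, matches the paper's (less explicitly stated) convergence times exactly and telescopes to the claimed bound $n+2N(N-1)T$.
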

\begin{proof}
For each substate $j\in\{1,\ldots,N\}$, let the corresponding source node $j$ design the observer gain $\mathbf{L}_j$ (featuring in equation \eqref{eqn:sourceupdate}) in a manner such that the matrix $(\mathbf{A}_{jj}-\mathbf{L}_{j}\mathbf{C}_{jj})$ has all its eigenvalues at $0$. Such a choice of $\mathbf{L}_j$ exists based on the fact that the pair $(\mathbf{A}_{jj},\mathbf{C}_{jj})$ is observable by construction. Let $n_j=dim(\mathbf{A}_{jj})$. By construction, $(\mathbf{A}_{jj}-\mathbf{L}_j\mathbf{C}_{jj})$ is then a nilpotent matrix of index at most $n_j$. Thus, it is easy to see that $\mathbf{e}^{(1)}_1[k]=\mathbf{0}, \forall k\geq n_1$, based on \eqref{eqn:mode1rolledout}. Referring to equation \eqref{eqn:errmode1}, and noting that $\tau^{(1)}_i[k]\leq2\bar{T}, \forall k \geq \bar{T}, \forall i\in\mathcal{V}$, we obtain: $\mathbf{e}^{(1)}_i[k]=\mathbf{0},\forall k \geq n_1+2\bar{T}, \forall i\in\mathcal{V}$. One can easily generalize this argument to the remaining substates by using an inductive reasoning akin to that employed in the proof of Theorem \ref{thm:main}. In particular, for any substate $j\in\{2,\ldots,N\}$, one can roll out the error dynamics for node $j$ as in \eqref{eqn:modejerrorrolledout}, with $\sum \limits_{q=1}^{(j-1)}n_q+2(j-1)\bar{T}$ as the initial time. By this time, the estimation errors of all nodes on all substates $q\in\{1,\ldots,j-1\}$ would have converged to zero. The nilpotentcy of $(\mathbf{A}_{jj}-\mathbf{L}_j\mathbf{C}_{jj})$ would then imply that $\mathbf{e}^{(j)}_j[k]=\mathbf{0}, \forall k \geq \sum \limits_{q=1}^{j}n_q+2(j-1)\bar{T}$. Based on \eqref{eqn:errornonsourcemodej}, the zero error of the source node $j$ will manifest into zero errors for the non-source nodes with delay at most $2\bar{T}$. Finally, noting that $\sum \limits_{q=1}^{N}n_q=n$ completes the proof.
\end{proof}

\begin{remark} Notice that given a desired convergence rate $\rho$, the general design approach described in the proof of Theorem \ref{thm:main} offers a considerable degree of freedom in choosing the parameters $\rho_1,\ldots,\rho_N$, since they only need to satisfy $0 < \rho_1 < \rho_2 < \cdots < \rho_N < \rho$. As such, this can be achieved in infinitely many ways, and the design flexibility so obtained in choosing the observer gains can be exploited to optimize transient performance, or performance against noise. In contrast, the proof of Corollary \ref{corr:finitetime} highlights a specific approach to obtain finite-time convergence. However, such an approach may lead to undesirable transient spikes in the estimation errors, owing to large observer gains.
\end{remark}
\section{Conclusion}
In this paper, we developed a new approach towards designing distributed observers that work under the basic assumption of joint observability, and can handle a very general class of time-varying graphs. Unlike existing literature, this is achieved without requiring multiple consensus iterations between consecutive time-steps of the dynamics. Instead, our main idea is based on introducing a metric that keeps track of the age-of-information being diffused across the network, and in turn, acts as measure of quality of such information.  We established that any desired exponential convergence rate can be achieved based on our approach. Furthermore, we showed that one can even obtain finite-time convergence via an appropriate choice of the observer gains.
\bibliographystyle{unsrt}
\bibliography{refs}
\end{document}